\newtheorem{prop}{Proposition}
\newtheorem{cor}{Corollary}
\newtheorem{lem}{Lemma}
\newtheorem{defi}{Definition}
\newtheorem{rem}{Remark}
\newtheorem{thm}{Theorem}
\theoremstyle{nonumberplain}
\newtheorem{proof}{Proof}
\newtheorem{proof-special}{Proof of Lemma~\ref{lem:slepian}}
\newcommand{\probP}{{\mathbb P}}
\newcommand{\probE}{{\mathbb E}}
\newcommand{\probV}{{\mathbb V}}
\newcommand{\mphi}{\bm\Phi}
\newcommand{\vect}[1]{{\mathbf{#1}}}
\newcommand{\ind}{{\mathbf 1}}
\begin{document}

\title{Lower Bound for RIP Constants and Concentration of Sum of Top Order Statistics}

\author{Gen~Li,
Xingyu~Xu,
and~Yuantao~Gu%
\thanks{The authors are with Department of Electronic Engineering, Tsinghua University, Beijing 100084, China. 
The corresponding author of this paper is Y. Gu (gyt@tsinghua.edu.cn).}
}

\date{Manuscript submitted July 5, 2019.}

\maketitle

\begin{abstract}
Restricted Isometry Property (RIP) is of fundamental importance in the theory of compressed sensing  
and forms the base of many exact and robust recovery guarantees in this field. 
Quantitative description of RIP involves bounding the so-called RIP constants 
of measurement matrices. 
In this respect,
it is noteworthy that most results in literature concerning RIP are upper bounds of RIP constants, 
which can be interpreted as theoretical guarantee of successful sparse recovery. 
On the contrary,
the land of lower bounds for RIP constants remains uncultivated. 
Lower bounds of RIP constants, if exist, 
can be interpreted as the \emph{fundamental limit} aspect of successful sparse recovery. 
In this paper, 
the lower bound of RIP constants Gaussian random matrices are derived, 
along with a guide for generalization to sub-Gaussian random matrices. 
This provides a new proof of the fundamental limit 
that the minimal number of measurements needed 
to enforce the RIP of order $s$ is $\Omega(s\log({\rm e}N/s))$, 
which is more straight-forward than the classical Gelfand width argument. 
Furthermore, in the proof we propose a useful technical tool featuring the concentration phenomenon 
for top-$k$ sum of a sequence of i.i.d. random variables, 
which is closely related to mainstream problems in statistics and is of independent interest.

{\bf Keywords:} 
Lower bound,
compressed sensing,
order statistics,
restricted isometry property, 
random matrix
\end{abstract}

\section{Introduction}
Compressed sensing is one of the major achievements in signal processing in the past years. 
The model of compressed sensing can be typically described as 
retrieving some data $\vect x\in\mathbb R^N$ from linear measurements $\vect y=\vect A\vect x$, 
where the measurement matrix $\vect A\in\mathbb R^{n\times N}$ is a underdetermined matrix, i.e. $n<N$. 
Apparently, this task is impossible without proper restrictions on $\vect x$. 
The most common restrictions on $\vect x$ is the sparsity assumption, 
which requires $\vect x$ has at most $s$ non-zero entries, where $s\ll N$. 
If this is the case, $\vect x$ can be efficiently recovered by numerous algorithms, 
for example, the so-called $\ell_1$-minimization:
\begin{equation*}
    \hat{\vect x}=\operatorname{arg~min}_{\vect z}\|\vect z\|_1\text{ s.t. }\vect y=\vect A\vect z.
\end{equation*}

Restricted Isometry Property (RIP) has played a dominant role in analysis of such algorithms
since it was proposed in \cite{Candes2008Restricted}.
The great power of this concept enables researchers to derive 
theoretical guarantees for many popular compressed sensing algorithms
including $\ell_1$-minimization, Orthogonal Matching Pursuit (OMP),
Compressive Sampling Matching Pursuit (CoSaMP), Iterative Hard Thresholding (IHT), 
Hard Thresholding Pursuit (HTP) and so on. 
Proof of such guarantees is based 
on the mode of showing or borrowing some upper bounds of the RIP constants of the sensing matrix 
and then using these upper bounds to analyze specified algorithms. 
This accounts for the extensive study in the literature on upper bounds of RIP constants. 
In contrast, few results on the lower bound of RIP constants are known. 
Similar to the fact that upper bound 
of RIP constants plays the role of success guarantee for sparse recovery, 
lower bound of RIP constants 
could give more insights to the fundamental limit of sparse recovery. 

In this paper we give such a lower bound for Gaussian random matrices 
and provide the method to generalize this result to sub-Gaussian random matrices. 
We also show that the lower bound is tight by proving a new upper bound of RIP constants, 
which is a slight improvement on previous results. 
With this approach we partially rediscover the fact that the minimal number of measurements 
needed to enforce the RIP of order $s$ is $\Omega(s\log({\rm e}N/s))$. 
Compared with the classical proof using Gelfand width, 
our proof bears less generality but is much more straight-forward. 

Furthermore, we identify a useful tool among the lines of our proof, 
which we call by concentration of sum of top order statistics. 
In the literature of probability and statistics, 
asymptotic behaviour and concentration phenomenon of order statistics 
have raised much attention \cite{Boucheron2012Concentration}. 
They are also closely related to the study of empirical process, 
which is one of the mainstream problems in statistics.
On the other hand, sum of top order statistics 
is another classical topic \cite{David2004Order} within this field.
However, known results on joint and individual concentration of order statistics appear 
to yield inequalities for sum of top order statistics that are far from optimal. 
This suggests that more tailored techniques are required to find the optimal bound, 
as we will do in this paper. 
To the best of our knowledge, 
this paper is the first one to establish an exponential concentration inequality for sum of top order statistics. 
We believe this result is of interest in other fields besides compressed sensing. 

We will elucidate the backgrounds and explain our contributions in more technical detail in Section \ref{sec:background}. 
\subsection{Notations}
Bold upper case letters, e.g. $\vect A$, are used to denote matrices, 
while bold lower case letters, e.g. $\vect x$, are used to denote vectors. 
The $s$-th RIP constant of a matrix $\vect A$, to be defined in Section \ref{sec:main}, 
is denoted by $\delta_s(\vect A)$, or simply by $\delta_s$ when there is no confusion. 
$\probP(\cdot)$ is the probability of an event. 
$\probE(\cdot)$ and $\probV(\cdot)$ denotes 
respectively the expectation and the variance of a random variable. 
By convention, a Gaussian matrix is a matrix with i.i.d. standard Gaussian entries. 
$\Psi_s$ denotes the set of non-zero $s$-sparse unit vectors in $\mathbb R^N$, 
c.f. Section \ref{sec:main}.
Denote by $C$ a positive constant that may vary upon each appearance. 
We write $a(n)\sim b(n)$ if $\lim_{n\to\infty}a(n)/b(n)=1$, 
$a(n)\lesssim b(n)$ if $a(n)\le Cb(n)$ for some constant $C$, 
and $a(n)\asymp b(n)$ if both $a(n)\lesssim b(n)$ and $b(n)\lesssim a(n)$ are true.

\subsection{Organization}
The rest of this paper is organized as follows. 
Our main results are presented in Section \ref{sec:main}, 
where we also demonstrate the important corollary on minimal measurements required 
for successful recovery. 
Section \ref{sec:background} is a more involved discussion 
on backgrounds and motivations of these results.
Section \ref{sec:proof-lb}, \ref{sec:proof-unimportant}, and \ref{sec:proof} are devoted to the proof of the main results.
For simplicity, these results are stated for Gaussian matrices, 
but most of them can be generalized easily to sub-Gaussian cases.
We sketch how such generalization could be done in Section \ref{sec:discussion}. 
In Section \ref{sec:simulations}, our new bounds are plotted 
in comparison with previous results. 
Finally in Section \ref{sec:conclusion}, we conclude the paper.

\section{Main Results}\label{sec:main}
We begin with necessary definitions. 
A vector $\vect x$ is called $s$-sparse, or $\|\vect x\|_0\le s$, 
if at most $s$ entries of $\vect x$ are non-zero. 
The $s$-th RIP constant of a matrix $\vect A$ characterizes how close $\vect A$ is to an isometry 
when restricted to the set of $s$-sparse vectors. 
\begin{defi}
The $s$-th RIP constant $\delta_s$ of a matrix $\vect A$ is defined to be the smallest nonnegative number
such that for any $s$-sparse vector $\vect x$, the following holds:
\begin{equation*}
    (1-\delta_s)\|\vect x\|^2\le\|\vect A\vect x\|^2\le(1+\delta_s)\|\vect x\|^2.
\end{equation*}
\end{defi}

RIP constants of random matrices are often asymmetric in the sense that 
the minimal $\delta$ satisfying $(1-\delta)\|\vect x\|^2\le \|\vect A\vect x\|^2$ 
for all $s$-sparse $\vect x$ is essentially different 
from the minimal $\delta$ satisfying $\|\vect A\vect x\|^2\le(1+\delta)\|\vect x\|^2$ 
for all $s$-sparse $\vect x$. 
For this reason we need more intricate notations for RIP constants. 
Denote by $\Psi_s$ the set of non-zero \emph{unit} vectors $\vect x$ in $\mathbb R^N$ such that $\|\vect x\|_0\le s$. 
Define 
\begin{align}
  \delta_s^+=\sup_{\vect x\in\Psi_s}\|\vect A\vect x\|^2-1,\\
  \delta_s^-=1-\inf_{\vect x\in\Psi_s}\|\vect A\vect x\|^2.
\end{align}
It is then obvious that
\begin{equation}
  \delta_s=\max(\delta_s^+,\delta_s^-).
\end{equation}
Hence it suffices to study $\delta_s^+$ and $\delta_s^-$ separately.

The first one of our main results is a lower bound for RIP constants of random matrices. 
For simplicity, we only state and prove the corresponding results for Gaussian matrices.
However, Theorem~\ref{thm:lowerbound} can be easily generalized to sub-Gaussian distributions, 
and Proposition~\ref{prop:gaussian-order-stat} is even more universal. 
Desired readers may find relevant discussions in Section \ref{sec:discussion}.
\begin{thm}\label{thm:lowerbound}
Let $\vect A$ be a Gaussian matrix in $\mathbb{R}^{n\times N}$. 
Then for some constant $C>0$ 
and with probability at least $1-C{\mathrm e}^{-n\varepsilon^2/C}$, 
the RIP constants of $\mathbf{\Phi} = \frac{1}{\sqrt{n}}\vect A$ satisfy
\begin{align*}
\delta_s^+ \ge &\left((1+\sqrt{p}T)(1-\delta)+\frac12pT^2\right)^{1/2}-1, \\
\delta_s^- \ge &1 - \left(1-\sqrt pT+(1+\sqrt pT)\delta+\frac12pT^2\right)^{1/2}, 
\end{align*}
where $s/N<1/5$, $p = \frac{s}{n}$, $\delta=\varepsilon+(n\log \frac{N}{s})^{-1/2}$, 
and $T$ is defined as follows: 
let $Y$ be a $\chi_1^2$ random variable 
and $t$ be the $(1-\frac{s-1}{N-1})$-quantile of $Y$ (i.e. $\probP(Y>t)=\frac{s-1}{N-1})$, 
then $T=\sqrt{\probE(Y\big | Y>t)}$.
\end{thm}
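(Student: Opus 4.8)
The plan is to prove both inequalities by exhibiting explicit $s$-sparse unit vectors that are nearly extremal for $\|\mathbf{\Phi}\vect x\|^2$ and showing their squared norms concentrate at the claimed values. Write $\vect\phi_1,\dots,\vect\phi_N$ for the columns of $\mathbf{\Phi}$ and single out the first one, setting $\vect u=\vect\phi_1/\|\vect\phi_1\|$. For a vector $\vect x$ supported on $\{1\}\cup S$ with $S\subseteq\{2,\dots,N\}$, I would split the squared norm along $\vect u$ and its orthogonal complement,
\begin{equation*}
\|\mathbf{\Phi}\vect x\|^2=\langle\vect u,\mathbf{\Phi}\vect x\rangle^2+\|P_{\vect u^\perp}\mathbf{\Phi}\vect x\|^2 ,
\end{equation*}
observing that $P_{\vect u^\perp}\vect\phi_1=0$, so the first coordinate contributes its full length $\|\vect\phi_1\|\approx1$ to the along-$\vect u$ part while the orthogonal part sees only the remaining columns. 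The along-$\vect u$ term equals $\big(x_1\|\vect\phi_1\|+\sum_{i\in S}x_i g_i\big)^2$ with $g_i=\langle\vect u,\vect\phi_i\rangle$; this single privileged column is what supplies the leading constant $1$ in the target expressions and is the reason the relevant quantile is $1-\tfrac{s-1}{N-1}$ rather than $1-\tfrac{s}{N}$.

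The key structural fact is that, conditioned on $\vect\phi_1$, the scalars $g_i$ ($i\ge2$) are i.i.d.\ $N(0,1/n)$ and are independent of the orthogonal components $P_{\vect u^\perp}\vect\phi_i$. I would therefore take $S$ to index the $s-1$ largest values of $g_i^2$ among the $N-1$ remaining columns; this data-dependent selection is harmless precisely because it depends only on the $g_i$, which are independent of the orthogonal block. Writing $G^2=\sum_{i\in S}g_i^2=\tfrac1n\sum_{i\in S}(n g_i^2)$, each $n g_i^2$ is a $\chi_1^2$ variable, so $G^2$ is $1/n$ times the top-$(s-1)$ sum of $N-1$ i.i.d.\ $\chi_1^2$ variables. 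Aligning the weights $x_i$ ($i\in S$) with the signs of $g_i$ (for $\delta_s^+$) or against them (for $\delta_s^-$) and reducing to a single angular parameter via $x_1=\cos\theta$, $\|x_S\|=\sin\theta$, the construction collapses to optimizing $(\cos\theta\pm G\sin\theta)^2+\sin^2\theta$, whose extrema over $\theta$ equal $1+\tfrac12 G^2\pm G\sqrt{1+G^2/4}$; discarding the favorable factor $\sqrt{1+G^2/4}\ge1$ yields the clean quantities $1\pm\sqrt p\,T+\tfrac12 pT^2$ once $G^2$ is replaced by $pT^2$.

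Two concentration inputs then close the argument. First, the orthogonal contribution $\|\sum_{i\in S}x_i P_{\vect u^\perp}\vect\phi_i\|^2$ is a Gaussian quadratic form on an $(n-1)\times(s-1)$ block that, by a standard smallest/largest singular value estimate for Gaussian matrices, lies within $(1\pm O(\varepsilon))\|x_S\|^2$ on a high-probability event; together with the concentration of $\|\vect\phi_1\|^2$ around $1$, these slacks are collected into the factor $(1-\delta)$ for $\delta_s^+$ and into the additive $(1+\sqrt p\,T)\delta$ term for $\delta_s^-$. Second, and most importantly, I would invoke Proposition~\ref{prop:gaussian-order-stat} to show that the top-$(s-1)$ sum of the $N-1$ i.i.d.\ $\chi_1^2$ variables concentrates sharply around $(s-1)\,\probE(Y\mid Y>t)=(s-1)T^2$, so that $G^2$ concentrates at $\tfrac{s-1}{n}T^2\approx pT^2$ and hence $G\approx\sqrt p\,T$; this is exactly where the $(1-\tfrac{s-1}{N-1})$-quantile $t$ and the definition of $T$ enter. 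Intersecting the relevant events, each failing with probability at most $C\mathrm e^{-n\varepsilon^2/C}$, produces the stated confidence.

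The main obstacle is the second concentration input. As noted in the introduction, generic joint or individual concentration bounds for order statistics are too lossy here: the top-$(s-1)$ sum is a sum over a \emph{random}, data-selected subset of sub-exponential $\chi_1^2$ variables, so one must simultaneously pin down the location of the $(s-1)$-th largest value near $t$ and control the fluctuation of the mass retained above it. Establishing the sharp exponential concentration of this top-sum—as opposed to the routine Gaussian-tail and singular-value estimates used elsewhere—is the technical heart of the theorem and is precisely the content of Proposition~\ref{prop:gaussian-order-stat}.
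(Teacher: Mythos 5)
Your construction is exactly the paper's: decompose $\|\mathbf{\Phi}\vect x\|^2$ along the first column, select the $s-1$ largest of the remaining inner products, align signs via the Cauchy--Schwarz equality case (your $\theta$-optimization collapses to the paper's choice $v_1=1/\sqrt2$ once you discard the factor $\sqrt{1+G^2/4}$), flip signs for $\delta_s^-$, and feed the top-$(s-1)$ sum into Proposition~\ref{prop:gaussian-order-stat}. The gap is in your first concentration input. You assert that the orthogonal contribution $\bigl\|\sum_{i\in S}x_iP_{\vect u^\perp}\vect\phi_i\bigr\|^2$ lies within $(1\pm O(\varepsilon))\|x_S\|^2$ ``by a standard smallest/largest singular value estimate for Gaussian matrices.'' That step fails quantitatively: singular-value bounds are \emph{uniform} over all vectors supported on $S$, and for an $(n-1)\times(s-1)$ Gaussian block the extreme singular values are genuinely $\approx 1\mp(\sqrt p+\varepsilon)$, so no choice of high-probability event makes the uniform relative error smaller than order $\sqrt p$. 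This is much larger than the theorem's error term $\delta=\varepsilon+(n\log\frac Ns)^{-1/2}$: for small $\varepsilon$ one has $(1+\sqrt pT)\delta\asymp(n\log\frac Ns)^{-1/2}+\frac{\sqrt s}{n}$, and both summands are $\ll\sqrt{s/n}=\sqrt p$. As written, your argument therefore proves only a weakened Theorem~\ref{thm:lowerbound} with additive loss $O(\sqrt p)$ inside the square roots --- enough for Corollary~\ref{cor:no.-of-measurements}, but not for the stated bound, and not enough to exhibit tightness against Proposition~\ref{prop:upperbound}.

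The repair is already contained in an observation you made yourself: the index set $S$ and the weights $x_S$ are measurable with respect to the $g_i$, which are independent of the orthogonal components $P_{\vect u^\perp}\vect\phi_i$. Hence, conditionally on the $g_i$, the vector $x_S$ is \emph{fixed}, and $\sum_{i\in S}x_iP_{\vect u^\perp}\vect\phi_i$ is a single Gaussian vector with covariance $\frac{\|x_S\|^2}{n}\left(\vect I-\vect u\vect u^{\rm T}\right)$; its squared norm is $\frac{\|x_S\|^2}{n}\chi_{n-1}^2$, which concentrates within relative error $\varepsilon$ with probability $1-2{\rm e}^{-Cn\min(\varepsilon,\varepsilon^2)}$. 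What is needed here is a pointwise (conditional) concentration bound, not a uniform operator-norm bound. This is precisely what the paper does: it applies the Hanson--Wright inequality to the quadratic form $\sum_{i,j\ge2}v_i'v_j'\langle\vect b_i,\vect b_j\rangle$ conditionally on $\vect a_1$ and $x_2,\ldots,x_N$, then integrates out the conditioning to obtain \eqref{eqn:Av_minor_quadratic_term}, whose deviation is $\varepsilon$ rather than $\sqrt p$. With that single substitution your proposal matches the paper's proof.
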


The above lower bound is tight in comparison with the following upper bound: 
\begin{prop}\label{prop:upperbound}
Let $\vect A$ be a Gaussian matrix in $\mathbb{R}^{n\times N}$. 
Then the RIP constants of $\vect\Phi = \frac{1}{\sqrt{n}}\vect A$ satisfy 
\begin{equation*}
  \delta_s = \sqrt{p}T+\frac{C}{\sqrt{n\log \frac{N}{s}}}+\frac{1}{2n}+\varepsilon
\end{equation*}
with probability at least $1-2{\mathrm e}^{-n\varepsilon^2/2}$, 
where $p = \frac{s}{n}<1/5$ and $T$ is defined as follows: 
let $Y$ be a $\chi_1^2$ random variable 
and $t$ be the $(1-\frac{s}{N})$-quantile of $Y$ (i.e. $\probP(Y>t)=\frac{s}{N})$, 
then $T=\sqrt{\probE(Y\big | Y>t)}$.
\end{prop}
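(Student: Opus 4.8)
The plan is to bound $\delta_s=\max(\delta_s^+,\delta_s^-)$ by controlling the two extreme quantities $\sup_{\vect x\in\Psi_s}\|\vect\Phi\vect x\|$ and $\inf_{\vect x\in\Psi_s}\|\vect\Phi\vect x\|$, separating the argument into a concentration step and an expectation step. First I would record that, as a function of the matrix entries, $\vect A\mapsto\sup_{\vect x\in\Psi_s}\|\tfrac1{\sqrt n}\vect A\vect x\|$ is Lipschitz with respect to the Frobenius norm with constant $1/\sqrt n$, because every competitor $\vect x$ is a unit vector; the infimum is Lipschitz with the same constant. Gaussian concentration of Lipschitz functionals then gives that each of these quantities deviates from its mean by at most $\varepsilon$ outside an event of probability $2\mathrm e^{-n\varepsilon^2/2}$. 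This reproduces exactly the failure probability and the additive $\varepsilon$ in the statement, so it remains to estimate the two expectations.

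For the expectation I would introduce the Gaussian process $G_{\vect u,\vect x}=\langle\vect u,\vect A\vect x\rangle$ indexed by $(\vect u,\vect x)\in S^{n-1}\times\Psi_s$, whose supremum is $\sup_{\vect x\in\Psi_s}\|\vect A\vect x\|$. A Slepian-type comparison (Lemma~\ref{lem:slepian}) against the decoupled process $\langle\vect g,\vect u\rangle+\langle\vect h,\vect x\rangle$, with $\vect g\in\mathbb R^n$ and $\vect h\in\mathbb R^N$ standard Gaussian, is valid here because $2\bigl(1-\langle\vect u,\vect u'\rangle\bigr)\bigl(1-\langle\vect x,\vect x'\rangle\bigr)\ge0$, and it bounds the expected supremum by $\probE\|\vect g\|+\probE\sup_{\vect x\in\Psi_s}\langle\vect h,\vect x\rangle=b_n+w(\Psi_s)$. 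After dividing by $\sqrt n$ the first term yields the leading constant $1$ together with the $\tfrac{1}{2n}$-scale correction coming from $b_n=\probE\|\vect g\|$ versus $\sqrt n$. The infimum is treated by the matching Gordon-type lower comparison, giving $\probE\inf_{\vect x\in\Psi_s}\|\vect A\vect x\|\ge b_n-w(\Psi_s)$.

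The crux is the Gaussian width $w(\Psi_s)=\probE\sup_{\vect x\in\Psi_s}\langle\vect h,\vect x\rangle$. The optimizing $\vect x$ is supported on the $s$ coordinates of largest magnitude of $\vect h$ and aligned with them, so $w(\Psi_s)=\probE\bigl(\text{sum of the }s\text{ largest among }h_1^2,\dots,h_N^2\bigr)^{1/2}$, i.e. the expected square root of the sum of the top-$s$ order statistics of $N$ i.i.d.\ $\chi_1^2$ variables. This is precisely the object governed by the concentration-of-top-order-statistics result (Proposition~\ref{prop:gaussian-order-stat}): its mean is $\approx sT^2$ with $T^2=\probE(Y\mid Y>t)$ and $\probP(Y>t)=s/N$, and its exponential concentration lets me replace the random sum by $sT^2$ up to a relative error of order $1/\log\frac{N}{s}$. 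Dividing by $\sqrt n$ converts $\sqrt{sT^2}$ into the main term of order $\sqrt p\,T$, while the Jensen gap between $\probE\sqrt{\,\cdot\,}$ and $\sqrt{\probE\,\cdot\,}$ and the residual order-statistics fluctuations are absorbed into the $C/\sqrt{n\log\frac{N}{s}}$ correction. Combining the expectation bounds with the concentration step, and using that the upper deviation dominates so that $\delta_s=\delta_s^+$, yields the advertised estimate.

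I expect the main obstacle to be the width estimate rather than the soft concentration machinery. Obtaining the sharp constant $T$, instead of the crude $\sqrt{2\log\frac{N}{s}}$ that a naive tail bound on $\chi_1^2$ would produce, is exactly what forces us to invoke the tailored exponential inequality for the sum of top order statistics; one must verify that this concentration is strong enough, and its mean precise enough, that the relative error, the Jensen correction, and any discretization all remain within the claimed $C/\sqrt{n\log\frac{N}{s}}+\tfrac{1}{2n}$. A secondary technical point is checking that the Slepian and Gordon comparisons apply uniformly over the sparse sphere $\Psi_s$ without a separate union bound over supports, since the top-$s$ selection already encodes the choice of support.
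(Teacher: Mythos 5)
Your proposal matches the paper's proof essentially step for step: the same $1/\sqrt n$-Lipschitz Gaussian concentration argument with failure probability $2\mathrm e^{-n\varepsilon^2/2}$, the same Slepian/Gordon comparison of $\langle\vect u,\vect A\vect v\rangle$ against the decoupled process $\langle\vect g,\vect u\rangle+\langle\vect h,\vect v\rangle$ (Lemma~\ref{lem:slepian}), the same identification of the width of $\Psi_s$ with $\probE\bigl(\sum_{i=1}^{s}h_{(i)}^2\bigr)^{1/2}$ handled by Proposition~\ref{prop:gaussian-order-stat}, and the same Kazarinoff-type bounds on $\probE\|\vect g\|$ producing the $\tfrac{1}{2n}$ term. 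The only nit is your closing claim that ``the upper deviation dominates so $\delta_s=\delta_s^+$'': this is unjustified (in the paper it is in fact the $\delta_s^-$ bound that picks up the $\tfrac{1}{2n}$ correction), but it is also unnecessary, since you bound both $\delta_s^+$ and $\delta_s^-$ by the same expression and taking the maximum closes the argument exactly as the paper does.
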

\begin{rem}
  The upper bound presented is in fact of the same order as the classical one, 
  see Theorem~\ref{thm:sub-Gaussian-RIP-upper-bound}. 
  The only improvement is a better multiplicative constant. 
  Such improvement, despite being minor, 
  makes the new upper bound very close to the new lower bound given in Theorem~\ref{thm:lowerbound}, 
  demonstrating the tightness of the lower bound. 
\end{rem}

The proof of Theorem~\ref{thm:lowerbound} and Proposition~\ref{prop:upperbound} 
makes crucial use of another main result in this paper, 
namely the concentration of top sum of order statistics. 
Recall that for a sequence of i.i.d. random variables $X_1,\ldots,X_n$, 
the order statistics $X_{(1)},\ldots,X_{(n)}$ are its non-increasing rearrangement. 

\begin{prop}\label{prop:gaussian-order-stat}
Let $Y_1, \ldots, Y_n$ be a sequence of i.i.d. $\chi_1^2$ random variables, 
and $Y_{(1)}, \ldots, Y_{(n)}$ be their order statistics. 
Assume that $k/n<1/5$.
Define
\begin{equation}
T_k = \sqrt{\frac{1}{k}\sum_{i=1}^k Y_{(i)}},
\end{equation}
and 
\begin{equation}
T = \sqrt{\probE  (Y_1 \big| Y_1 > t)},
\end{equation}
where $t$ is defined by $\mathbb{P}\left(Y_1 > t\right) = \frac{k}{n}$, 
then
\begin{equation}
\left|\probE  T_k - T\right| \le \frac{C}{\sqrt{k\log \frac{n}{k}}},
\end{equation}
where $C$ is a positive constant, 
and
\begin{equation} 
    \probP\left( \left|T_k-T\right| > \frac{C}{\sqrt{k\log \frac{n}{k}}} + \varepsilon \right) \le 2{\rm e}^{-\frac{k\varepsilon^2}{2}}, \quad \forall \varepsilon>0.
\end{equation}
\end{prop}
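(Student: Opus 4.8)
The plan is to prove Proposition~\ref{prop:gaussian-order-stat} by analyzing the sum of top order statistics $\sum_{i=1}^k Y_{(i)}$ directly, rather than through per-statistic concentration, since the latter is known to be loose. The key observation is that the top-$k$ sum admits a clean variational characterization: for any threshold $u$, we have $\sum_{i=1}^k Y_{(i)} = \min_u \bigl( ku + \sum_{j=1}^n (Y_j - u)_+ \bigr)$, where $(\cdot)_+$ denotes the positive part. This is the standard convex-duality formula for the sum of the largest $k$ entries. The advantage is that the function $u \mapsto \sum_j (Y_j - u)_+$ is a sum of i.i.d. bounded-variation terms, so its fluctuations are controllable by classical concentration, and the envelope over $u$ can be handled because the objective is convex in $u$ with the optimal $u^*$ concentrating near the deterministic quantile $t$.

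First I would establish the expectation bound. Choosing $u = t$, the quantile defined by $\probP(Y_1 > t) = k/n$, one gets $\probE \sum_{i=1}^k Y_{(i)} \le kt + n\,\probE (Y_1 - t)_+$. A matching lower bound follows from the variational formula by noting that the minimizing $u^*$ is the empirical $(k/n)$-quantile, which differs from $t$ by a fluctuation of order $\sqrt{\log(n/k)/n}$ times the reciprocal density. The main computation is then to show $kt + n\,\probE (Y_1 - t)_+ = k\,\probE(Y_1 \mid Y_1 > t) = kT^2$ exactly, which follows since $\probE(Y_1 - t)_+ = \probP(Y_1 > t)\,\probE(Y_1 - t \mid Y_1 > t) = \tfrac{k}{n}(T^2 - t)$. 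Thus $\probE \tfrac1k \sum_{i=1}^k Y_{(i)}$ equals $T^2$ up to a correction of order $1/\sqrt{k\log(n/k)}$ coming from the quantile-fluctuation and the boundary behaviour of the $\chi_1^2$ density near $t$; converting this to the stated bound on $|\probE T_k - T|$ uses the mean-value inequality $|\sqrt{a}-\sqrt{b}| \le |a-b|/(\sqrt a + \sqrt b)$ together with the fact that $T$ stays bounded away from zero.

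Next I would prove the concentration inequality. Since $T_k = \sqrt{(1/k)\sum_{i=1}^k Y_{(i)}}$ is a function of the $n$ i.i.d. variables $Y_1,\dots,Y_n$, the natural route is a bounded-differences or functional argument, but $Y_1$ being $\chi_1^2$ is only sub-exponential, so a raw McDiarmid bound fails. Instead I would exploit the variational representation to write $\tfrac1k\sum_{i=1}^k Y_{(i)} \le u + \tfrac1k\sum_j (Y_j - u)_+$ for the fixed deterministic $u = t$, reducing the upper tail of $T_k^2$ to a sum of i.i.d. terms $(Y_j - t)_+$. The key point is that each term $(Y_j - t)_+$ is nonzero only on the event $\{Y_j > t\}$ of probability $k/n$, so the sum is a sparse sum of $k$ (in expectation) sub-exponential contributions; a Bernstein-type bound then gives a tail of the form $\exp(-ck\varepsilon^2)$ in the regime of interest, matching the claimed $2\mathrm e^{-k\varepsilon^2/2}$. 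The lower tail is handled symmetrically using a second fixed cut-point slightly above $t$, again reducing to an i.i.d. sum over the light event. Finally I would pass from $T_k^2$ to $T_k$ via the same $\sqrt{\cdot}$-Lipschitz estimate, absorbing constants, and combine with the expectation bound to obtain the stated two-sided inequality.

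The hard part will be controlling the fluctuation of the \emph{optimal} cut-point $u^*$ (equivalently, the empirical quantile) to get the lower bound on the expectation and the lower-tail concentration with the sharp $1/\sqrt{k\log(n/k)}$ rate. A naive choice of fixed $u$ loses a logarithmic factor; the correct rate requires matching the deterministic quantile displacement against the $\chi_1^2$ density $f(t)$ at the threshold, whose tail decay $f(t) \asymp (k/n)\sqrt{t}$ drives the $\log(n/k)$ gain. Quantifying this boundary effect precisely, and verifying that the empirical-quantile fluctuation does not degrade the exponent in the tail, is where the delicate estimates live; the rest reduces to standard sub-exponential concentration and the exact conditional-expectation identity above.
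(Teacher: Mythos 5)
Your proposal diverges from the paper's proof in both of its main mechanisms, and in its present form it has a genuine gap exactly at the point you yourself flag as ``where the delicate estimates live.'' The upper direction of your argument is fine and indeed elegant: taking $u=t$ in the variational formula gives $\probE T_k^2\le t+\tfrac nk\probE(Y_1-t)_+=T^2$ exactly, hence $\probE T_k\le T$ by Jensen, and a Bernstein bound on $\sum_j(Y_j-t)_+$ does give an upper tail of the form ${\rm e}^{-ck\varepsilon^2}$ after the square-root conversion (using $T\gtrsim1$). The problem is the lower direction, which is \emph{not} symmetric. To use a fixed cut-point $t^+$ above $t$ you need the event $\#\{j:Y_j>t^+\}\le k$ to fail with probability at most ${\rm e}^{-k\varepsilon^2/2}$, which forces $n\probP(Y_1>t^+)=k-\Delta$ with $\Delta\gtrsim k\varepsilon+\sqrt k$. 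On that event your i.i.d. minorant $\sum_j Y_j\ind_{\{Y_j>t^+\}}$ is missing $\approx\Delta$ of the top-$k$ terms, each of size $\approx t$, so its mean falls short of $kT^2$ by $\approx\Delta t\approx\varepsilon kt$; in the $T_k$ scale (dividing by $2kT$ with $T\asymp\sqrt t$) this is a deficit of order $\varepsilon\sqrt{\log(n/k)}$, not $\varepsilon$. So the naive two-threshold reduction proves only
\begin{equation*}
\probP\Bigl(T_k<T-C\varepsilon\sqrt{\log\tfrac nk}\Bigr)\le 2{\rm e}^{-k\varepsilon^2/2},
\end{equation*}
i.e. it loses a $\log(n/k)$ factor in the exponent --- precisely the factor the proposition is about. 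Repairing this inside your framework requires a cancellation you do not state (re-centering each term at the threshold, e.g. bounding the top-$k$ sum below by $kt^++\sum_j(Y_j-t^+)\ind_{\{Y_j>t^-\}}$ on the good event, so that the binomial fluctuation of the exceedance count no longer multiplies $t$), plus a separate variance bound on $T_k$ to pass from $\probE T_k^2$ to $(\probE T_k)^2$ in the expectation estimate; neither is in the proposal. A smaller slip: the empirical-quantile fluctuation is $\asymp\sqrt k/(nf(t))\asymp1/\sqrt k$, not $\sqrt{\log(n/k)/n}$ times $1/f(t)$.

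For contrast, the paper avoids both difficulties by exploiting structure your proposal never uses. First, since $Y_i=X_i^2$ with $X_i$ Gaussian, $T_k=k^{-1/2}\bigl(\sum_{i=1}^k X_{(i)}^2\bigr)^{1/2}$ is a $1/\sqrt k$-Lipschitz function of the Gaussian vector $(X_1,\ldots,X_n)$, so Gaussian concentration of measure gives $\probP(|T_k-\probE T_k|>\varepsilon)\le2{\rm e}^{-k\varepsilon^2/2}$ with the stated constant $1/2$ --- a constant Bernstein cannot produce. Second, for the centering it conditions on $Y_{(k+1)}$: given $Y_{(k+1)}=s$, the top-$k$ average is a mean of $k$ i.i.d. draws from the law of $Y\,|\,Y>s$, so $|\probE T_k^2-T^2|\le C\,\probE|Y_{(k+1)}-t|\lesssim\sqrt{t/(k\log(n/k))}$ by Lipschitz continuity of $s\mapsto\probE(Y|Y>s)$ together with order-statistic concentration (the paper's Lemma via binomial/entropy bounds), and dividing by $T>\sqrt t$ yields the $1/\sqrt{k\log(n/k)}$ rate with no uniform control over thresholds. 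Your variational route is viable in principle and its upper half is cleaner than the paper's, but as written the lower tail and the $\probE T_k$ lower bound --- the substance of the result --- are not established.
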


We briefly discuss some corollaries of these results, 
of which the most important one is the minimal number of measurements for successful recovery. 
First we need an asymptotic estimation of $t$ and $T$ in Theorem~\ref{thm:lowerbound}.
\begin{prop}\label{prop:asymp}
In the asymptotic regime $N\to\infty$, $s/N\to0$, we have
\begin{equation}
  t\sim2\log\frac Ns,\quad 
  T\sim\sqrt{2\log\frac Ns}.
\end{equation}
As a consequence, in the asymptotic regime $N\to\infty$, 
$s/N\le\gamma$ for some constant $\gamma\in(0,1/5)$, 
we have
\begin{equation*}
  t\asymp\log\frac Ns,\quad 
  T\asymp\sqrt{\log\frac Ns}.
\end{equation*}
\end{prop}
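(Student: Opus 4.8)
The plan is to reduce everything to the standard Gaussian tail by writing $Y=Z^2$ with $Z\sim N(0,1)$, and then to invoke the Mills-ratio asymptotic. Throughout let $\varphi(x)=\frac{1}{\sqrt{2\pi}}\mathrm e^{-x^2/2}$ and $Q(x)=\probP(Z>x)$, so that $Q(x)\sim\varphi(x)/x$ as $x\to\infty$. Since $\probP(Y>t)=\probP(|Z|>\sqrt t)=2Q(\sqrt t)$, the defining relation $\probP(Y>t)=\frac{s-1}{N-1}$ becomes $2Q(\sqrt t)=\frac{s-1}{N-1}$. First I would pin down $t$: taking logarithms in $2Q(\sqrt t)\sim\sqrt{2/(\pi t)}\,\mathrm e^{-t/2}=\frac{s-1}{N-1}$ gives $\frac t2+O(\log t)=\log\frac{N-1}{s-1}$. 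Because $\log t=O\!\big(\log\log\frac Ns\big)=o\!\big(\log\frac Ns\big)$ and $\log\frac{N-1}{s-1}\sim\log\frac Ns$ in the regime $N\to\infty$, $s/N\to0$ (using that $\log\frac Ns\to\infty$), the leading balance is $\frac t2\sim\log\frac Ns$, whence $t\sim2\log\frac Ns$.

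Next I would compute $T$. Writing $\probE(Y\mid Y>t)=\probE\!\big(Z^2\ind_{\{Z^2>t\}}\big)/\probP(Z^2>t)$ and integrating by parts via $z\varphi(z)=-\varphi'(z)$, one gets $\int_a^\infty z^2\varphi(z)\,\mathrm dz=a\varphi(a)+Q(a)$. With $a=\sqrt t$, the symmetric factors of $2$ cancel and this yields
\begin{equation*}
  \probE(Y\mid Y>t)=\frac{\sqrt t\,\varphi(\sqrt t)+Q(\sqrt t)}{Q(\sqrt t)}=\frac{\sqrt t\,\varphi(\sqrt t)}{Q(\sqrt t)}+1\sim t+1\sim t,
\end{equation*}
where the penultimate step again uses $Q(\sqrt t)\sim\varphi(\sqrt t)/\sqrt t$. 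Hence $\probE(Y\mid Y>t)\sim t\sim2\log\frac Ns$ and $T=\sqrt{\probE(Y\mid Y>t)}\sim\sqrt t\sim\sqrt{2\log\frac Ns}$, which is the first assertion.

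For the $\asymp$ consequence I would split the regime $s/N\le\gamma$ into two parts. When $s/N\to0$ the relations $\sim$ already deliver $t\asymp\log\frac Ns$ and $T\asymp\sqrt{\log\frac Ns}$. When $s/N$ instead stays in a compact subinterval $[\gamma_0,\gamma]\subset(0,1/5)$, the tail level $\frac{s-1}{N-1}$ lies in a compact subinterval of $(0,1)$, so by strict monotonicity and continuity of the quantile map both $t$ and $\log\frac Ns$ remain bounded and bounded away from $0$ with constants depending only on $\gamma_0,\gamma$; the same then holds for $T$. Combining the two cases gives the claimed $\asymp$ with uniform constants.

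The hard part will be the bookkeeping of the subdominant $\log t$ and $O(1)$ terms when inverting the tail relation, together with justifying that replacing $\frac{s-1}{N-1}$ by $\frac sN$ perturbs only lower-order terms; once these are controlled, both the $\sim$ and $\asymp$ statements follow. One should also record $s\ge2$ (so that the quantile level $\frac{s-1}{N-1}$ is strictly positive) as the implicit standing assumption under which the estimates are stated.
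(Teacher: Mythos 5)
Your proposal is correct and follows essentially the same route as the paper: invert the Gaussian tail (Mills-ratio) asymptotic and take logarithms to get $t\sim2\log\frac Ns$, then show $\probE(Y\mid Y>t)\sim t$ via integration by parts, so that $T\sim\sqrt t$. The only cosmetic difference is in the second step, where the paper uses its exact identity $\probE(X^2\mid X^2>t)=1+t\big/\probE\sqrt{t/(t+Y)}$ with $Y\sim\mathrm{Exp}(\tfrac12)$ plus bounded convergence while you reapply the Mills-ratio asymptotic directly; your added care with the level $\frac{s-1}{N-1}$ versus $\frac sN$ and with the case-split for the $\asymp$ claim is, if anything, more complete than the paper's own treatment.
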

\begin{proof}
See Section \ref{app:asymp}.
\end{proof}

\begin{cor}[Minimal number of measurements]\label{cor:no.-of-measurements}
  The minimal number of random Gaussian measurements to enforce RIP is $\Omega(s\log(N/s))$.
\end{cor}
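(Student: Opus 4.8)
The plan is to convert the high-probability lower bound of Theorem~\ref{thm:lowerbound} into an unconditional lower bound on the number of rows $n$, using the order estimates of Proposition~\ref{prop:asymp} to track the relevant quantities. Since $\delta_s = \max(\delta_s^+,\delta_s^-) \ge \delta_s^+$, it suffices to show that $\delta_s^+$ is forced to be large whenever $n$ is too small. The upper isometry constant $\delta_s^+$ is the right quantity to watch, because, unlike $\delta_s^- = 1 - \inf_{\vect x\in\Psi_s}\|\mphi\vect x\|^2 \le 1$, it is unbounded and can therefore witness a gross failure of RIP.

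First I would fix $\varepsilon$ to be a small positive constant, so that the failure probability $C\mathrm{e}^{-n\varepsilon^2/C}$ in Theorem~\ref{thm:lowerbound} tends to $0$ as $n\to\infty$ and the stated lower bound on $\delta_s^+$ holds with high probability. Writing $p = s/n$ and invoking Proposition~\ref{prop:asymp} in the regime $s/N\le\gamma$, I would record the two order estimates $\sqrt{p}\,T \asymp \sqrt{s\log(N/s)/n}$ and $pT^2 \asymp s\log(N/s)/n$. In the same regime the correction term $(n\log\tfrac{N}{s})^{-1/2}$ appearing in $\delta$ is negligible, so $\delta$ is essentially the constant $\varepsilon$ and $1-\delta$ stays bounded away from $0$.

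Next I would substitute these into the bound and drop the non-negative first summand, obtaining
\[
\delta_s^+ \ge \left((1+\sqrt{p}\,T)(1-\delta)+\tfrac12 pT^2\right)^{1/2}-1 \ge \left(\tfrac12 pT^2\right)^{1/2}-1 .
\]
Setting $\mu := s\log(N/s)/n$, this reads $\delta_s^+ \gtrsim \sqrt{\mu}-1$. Consequently, for $\mphi$ to meet any fixed RIP threshold — in particular the minimal requirement $\delta_s<1$ that keeps the lower isometry inequality $(1-\delta_s)\|\vect x\|^2\le\|\mphi\vect x\|^2$ non-degenerate — one must have $\mu = \complexityO(1)$, i.e. $n \gtrsim s\log(N/s)$. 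Since this lower bound holds with probability tending to $1$, any $n$ that enforces RIP with high probability must be $\Omega(s\log(N/s))$, which is the claim.

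The main obstacle I anticipate is bookkeeping rather than conceptual: one must verify that the relations in Proposition~\ref{prop:asymp} are uniform over the whole admissible range $s/N\le\gamma$ rather than merely pointwise asymptotic, and that the error terms $\delta$, $(n\log\tfrac{N}{s})^{-1/2}$, and the small discrepancy between the $(1-\tfrac{s-1}{N-1})$- and $(1-\tfrac{s}{N})$-quantiles defining $T$ do not spoil the conclusion. Once these are controlled, the contradiction with a bounded RIP constant is immediate.
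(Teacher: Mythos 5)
Your proposal is correct and takes essentially the same route as the paper's own proof: both extract from Theorem~\ref{thm:lowerbound} that enforcing RIP forces $pT^2$ to be bounded by a constant, then invoke the estimate $T\asymp\sqrt{\log(N/s)}$ from Proposition~\ref{prop:asymp} to conclude $n=\Omega(s\log(N/s))$. Your version merely makes explicit the probability bookkeeping (fixing $\varepsilon$ constant), the choice of $\delta_s^+$ as the witnessing quantity, and the dropping of the non-negative summand, all of which the paper's terser argument leaves implicit.
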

\begin{proof}
  By Theorem~\ref{thm:lowerbound}, the RIP is enforced only if $pT^2$ is bounded by some constant. 
  From the foregoing asymptotic analysis we see that $T\asymp\sqrt{\log\frac Ns}$. 
  Thus $n=s/p=sT^2/(pT^2)=\Omega(s\log\frac Ns)$.
\end{proof}

\section{Related Works}\label{sec:background}
\subsection{Restricted Isometry Property}
In the literature of compressed sensing, 
RIP is a powerful tool to prove exact and robust recovery results for various algorithms. 
Here exact recovery means that the algorithm recovers $\vect x$ exactly 
from the measurements $\vect y=\vect A\vect x$ for all $s$-sparse $\vect x$, 
and robust recovery means that the algorithm returns 
from noisy measurements $\vect y=\vect A\vect x+\vect e$ an estimate $\tilde{\vect x}$ of $\vect x$ with accuracy
\[\|\vect x-\tilde{\vect x}\|_2\le C_1\frac{\sigma_s(\vect x)_1}{\sqrt s}+C_2\tau\]
for \emph{all} vectors $\vect x$, 
where $C_1$, $C_2$ are positive constants, 
$\sigma_s(\vect x)_p$ is the $\ell_p$-compressibility of $\vect x$, 
defined as 
\[\sigma_s(\vect x)_p=\inf_{\|\vect z\|_0\le s}\|\vect x-\vect z\|_p,\]
and $\tau$ is a deterministic upper bound of the noise level 
satisfying $\tau\ge\|\vect e\|_2$. 
We partly borrow from \cite{Rauhut2012Restricted} the following 
Table~\ref{tab:RIP-and-algorithm} on the state-of-art results 
of RIP requirements for various algorithms 
to ensure successful exact and robust recovery for $s$-sparse vectors. 
All of these results can be found, for instance, in \cite{Foucart2017Mathematical} 
and the references therein. 

\begin{table}[t]
\centering
\caption{Requirements on RIP constants for various recovery algorithms \cite{Rauhut2012Restricted}.}
\renewcommand\arraystretch{1.2}
\begin{tabular}{cc}
\hline
Algorithm & Requirements\\
\hline
$\ell_1$-minimization & $\delta_s<\frac13$ or $\delta_{2s}<\frac4{\sqrt{41}}$\\
OMP & $\delta_s<\frac{1}{1+\sqrt s}$ or $\delta_{13s}<\frac16$ \\
CoSaMP & $\delta_{4s}\le\frac{\sqrt{\sqrt{11/3}-1}}{2}$ \\
IHT, HTP & $\delta_{3s}<\frac{1}{\sqrt3}$\\
\hline
\end{tabular}
\label{tab:RIP-and-algorithm}

\renewcommand\arraystretch{1}
\end{table}

Ideally, if one may effectively evaluate the RIP constants of a given matrix $\vect A$, 
not much is left to worry about: with the precise value of RIP constants
it can be inferred from Table~\ref{tab:RIP-and-algorithm} 
whether we have theoretical guarantee for any of our favourite algorithms. 
Unfortunately, 
this is impossible due to the NP-hardness of evaluating RIP constants \cite{tillmann2013computational}. 
In fact, even qualitatively certifying RIP is NP-hard \cite{bandeira2013certifying}. 
A common substitute for the precise values of RIP in theoretical anlysis 
is the \emph{upper bound} of RIP constants, 
which would suffice to guarantee successful recovery. 
The RIP upper bound for sub-Gaussian matrices, stated as below, is a landmark of such results.
\begin{thm}\emph{\cite{baraniuk2008simple}}\label{thm:sub-Gaussian-RIP-upper-bound}
Let $\vect A$ be a $n\times N$ random matrix with i.i.d. (centered) sub-Gaussian entries 
of variance $1$ and sub-Gaussian norm uniformly bounded by $K>0$. 
Then the RIP constants of $\mphi\overset{\triangle}{=}\frac{1}{\sqrt n}\vect A$ satisfies
\begin{equation}\label{eqn:RIP-upper-bound-prob}
    \probP(\delta_s>\delta)\le2\binom Ns \exp(-c_1\delta^2n+c_2s)
\end{equation}
where $c_1$, $c_2$ are some positive constants that depend only on $K$.
\end{thm}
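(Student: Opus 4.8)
The plan is to reproduce the classical concentration-of-measure argument: first control $\|\mphi\vect x\|^2$ for a single fixed sparse vector, then upgrade this to a uniform bound over one $s$-dimensional coordinate subspace by means of a covering net, and finally take a union bound over all $\binom Ns$ possible supports.

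First I would fix a support set $S$ with $|S|=s$ and a unit vector $\vect x$ supported on $S$, and write $\|\mphi\vect x\|^2=\frac1n\sum_{i=1}^n(\vect a_i^\top\vect x)^2$, where $\vect a_i^\top$ is the $i$-th row of $\vect A$. Since the entries are centered, variance-one sub-Gaussian with sub-Gaussian norm at most $K$, each $\vect a_i^\top\vect x$ is a mean-zero, variance-one sub-Gaussian variable whose norm is controlled by $K$, so $(\vect a_i^\top\vect x)^2-1$ is centered and sub-exponential with parameters depending only on $K$. A Bernstein-type tail bound for the sum of $n$ independent such variables then yields, for $t\in(0,1)$,
\[
\probP\left(\left|\|\mphi\vect x\|^2-1\right|>t\right)\le 2\exp\left(-c_0 n t^2\right),
\]
with $c_0$ depending only on $K$. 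I expect this single-vector estimate to be the main obstacle: one must check that squaring a sub-Gaussian variable produces a sub-exponential one with norm bounded purely in terms of $K$, and that in the small-deviation regime $t<1$ the relevant exponent scales as $n t^2$ rather than the $nt$ behaviour that governs the heavier sub-exponential tail.

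Next I would pass from the fixed vector to the entire unit sphere of the subspace $\{\vect x:\operatorname{supp}(\vect x)\subseteq S\}$. Writing $B=\mphi_S^\top\mphi_S-I$, so that $\|\mphi\vect x\|^2-1=\langle B\vect x,\vect x\rangle$ and the restricted RIP constant equals $\delta_s^{(S)}=\|B\|$, I would fix a $\tfrac14$-net of the sphere, which has at most $9^s$ points, and apply the single-vector bound with $t=\delta/2$ together with a union bound over the net. The standard operator-norm-via-net inequality $\|B\|\le(1-2\cdot\tfrac14)^{-1}\max_{\vect q}|\langle B\vect q,\vect q\rangle|=2\max_{\vect q}|\langle B\vect q,\vect q\rangle|$ then propagates the control from the net to every unit vector in the subspace, so that on the corresponding event $\delta_s^{(S)}\le\delta$. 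The failure probability for this fixed $S$ is thus at most $2\cdot 9^s\exp(-c_0\delta^2 n/4)$.

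Finally, since $\delta_s=\max_{|S|=s}\delta_s^{(S)}$, I would take a union bound over the $\binom Ns$ choices of support. The net cardinality contributes the factor $9^s=\exp(s\log 9)$, which is precisely the role of the $\exp(c_2 s)$ term with the \emph{$\delta$-independent} constant $c_2=\log 9$, while the single-vector exponent supplies $\exp(-c_1\delta^2 n)$ with $c_1=c_0/4$. Collecting these gives
\[
\probP(\delta_s>\delta)\le 2\binom Ns\exp\left(-c_1\delta^2 n+c_2 s\right),
\]
with $c_1,c_2$ depending only on $K$, as claimed. The only place where care with constants is genuinely needed is the first step; the net and union-bound steps are routine once the single-vector concentration is in hand.
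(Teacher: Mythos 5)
Your proposal is correct, but note that the paper itself contains no proof of this statement: Theorem~\ref{thm:sub-Gaussian-RIP-upper-bound} is quoted from \cite{baraniuk2008simple}, so there is no internal argument to compare against. Your argument is the standard one and is essentially the same in spirit as the cited reference: single-vector concentration, a net over each $s$-dimensional coordinate sphere, and a union bound over the $\binom Ns$ supports. The one place where your version is a genuine (and welcome) refinement is the second step: \cite{baraniuk2008simple} uses a $\delta$-dependent net of cardinality $(12/\delta)^s$ and propagates the bound by a chaining-type estimate on $\|\mphi\vect x\|$ itself, which produces a factor $\exp(s\log(12/\delta))$ and hence a constant in front of $s$ that degrades as $\delta\to0$; your use of a fixed $\tfrac14$-net of cardinality $9^s$ combined with the symmetric-matrix inequality $\|B\|\le(1-2\epsilon)^{-1}\max_{\vect q}|\langle B\vect q,\vect q\rangle|$ yields the $\delta$-independent constant $c_2=\log 9$, which is exactly the form \eqref{eqn:RIP-upper-bound-prob} asserts. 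The only step you flag but do not fully write out, that a variance-one sub-Gaussian variable with $\psi_2$-norm controlled by $K$ has a square whose centered $\psi_1$-norm is bounded in terms of $K$ alone, and that Bernstein's inequality then gives exponent $\asymp nt^2$ throughout $t\in(0,1)$ (in both the quadratic and the linear regime, since $\min(t^2/K_1^2,\,t/K_1)\ge t^2\min(1,K_1^{-2})$ for $t<1$), is standard and does hold with constants depending only on $K$, so the proof is complete as sketched.
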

\begin{cor}
For measurement matrix $\mphi$ as in Theorem~\ref{thm:sub-Gaussian-RIP-upper-bound}, 
there exists some positive constant $C$ that depends only on $K$, such that 
$\delta_s\le\delta$ holds with probability at least $1-\varepsilon$ whenever
\begin{equation}\label{eqn:classical-sub-Gaussian-bound}
  n\ge C\delta^{-2}(s\log({\rm e}N/s)+\log(2\varepsilon^{-1})).
\end{equation}
\end{cor}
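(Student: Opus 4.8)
The plan is to invert the tail bound \eqref{eqn:RIP-upper-bound-prob} of Theorem~\ref{thm:sub-Gaussian-RIP-upper-bound}, i.e.\ to determine how large $n$ must be so that the stated failure probability drops below $\varepsilon$. First I would control the binomial coefficient by the elementary estimate $\binom Ns\le({\rm e}N/s)^s$, and then take logarithms in \eqref{eqn:RIP-upper-bound-prob} to obtain
\begin{equation*}
  \log\probP(\delta_s>\delta)\le\log2+s\log\tfrac{{\rm e}N}{s}-c_1\delta^2n+c_2s.
\end{equation*}
Imposing that the right-hand side be at most $\log\varepsilon$ and rearranging so as to isolate the term carrying $n$ gives the sufficient condition
\begin{equation*}
  c_1\delta^2n\ge s\log\tfrac{{\rm e}N}{s}+c_2s+\log(2\varepsilon^{-1}).
\end{equation*}

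The only point deserving a brief remark is the merging of the two terms linear in $s$. Since $s\le N$ forces ${\rm e}N/s\ge{\rm e}$, we have $\log({\rm e}N/s)\ge1$, whence $c_2s\le c_2\,s\log({\rm e}N/s)$ and the two contributions combine into $(1+c_2)\,s\log({\rm e}N/s)$ without inflating the order of the bound. Dividing through by $c_1\delta^2$ then yields
\begin{equation*}
  n\ge\frac{1}{c_1}\delta^{-2}\Big((1+c_2)\,s\log\tfrac{{\rm e}N}{s}+\log(2\varepsilon^{-1})\Big),
\end{equation*}
which is precisely \eqref{eqn:classical-sub-Gaussian-bound} once we set $C=\max\{1,1+c_2\}/c_1$. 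Because $c_1$ and $c_2$ depend only on the sub-Gaussian norm bound $K$, the resulting constant $C$ inherits dependence on $K$ alone, as required.

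Since every step above is elementary algebra, there is no genuine analytic obstacle here; the argument is really a matter of careful bookkeeping of the two universal constants, the key observation being that $\log({\rm e}N/s)\ge1$ permits the additive $c_2s$ term to be absorbed into the dominant $s\log({\rm e}N/s)$ term.
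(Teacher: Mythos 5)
Your proposal is correct, and it is exactly the argument the paper leaves implicit: the corollary is stated as an immediate consequence of Theorem~\ref{thm:sub-Gaussian-RIP-upper-bound}, and the only way to derive it is the inversion you carry out, namely bounding $\binom Ns\le({\rm e}N/s)^s$, requiring the resulting exponential bound to be at most $\varepsilon$, and absorbing the $c_2s$ term via $\log({\rm e}N/s)\ge1$. The bookkeeping of constants, including the observation that $\log(2\varepsilon^{-1})\ge0$ so that multiplying it by $1+c_2$ only strengthens the sufficient condition, is handled correctly.
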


The above corollary indicates that when the number of measurements $n\gtrsim s\log(eN/s)$, 
successful recovery is guaranteed with high (in fact, $1-2\exp(-C'\delta^{-2}m)$) probability. 

We turn to the opposite side of the problem: 
what is the minimal number of measurements 
when successful recovery is to be expected? 
A classical argument by estimating Gelfand width \cite{foucart2010gelfand} solves this problem 
by showing that for \emph{any} measurement matrix $\vect A\in\mathbb R^{n\times N}$ 
and \emph{any} recovery algorithm $\Delta$, interpreted as a map $\mathbb R^n\to\mathbb R^N$, 
if $\|\vect x-\Delta(\vect A\vect x)\|_2\le C\sigma_s(\vect x)_1$ holds for all $\vect x$, 
we necessarily have 
\[n\gtrsim s\log\left(\frac{{\rm e}N}{s}\right)\]

This, when combined with Table~\ref{tab:RIP-and-algorithm}, 
yields the following corollary (in a stronger form).
\begin{cor}
If the $s$-th RIP constant of $\vect A\in\mathbb R^{n\times N}$ satisfies 
$\delta_s<\frac{1}{\sqrt 3}$, then 
\begin{equation}\label{eqn:Gelfand-bound}
    n\ge cs\log\left(\frac{{\rm e}N}{s}\right)
\end{equation}
for some constant $c>0$ depending only on $\delta_s$.
\end{cor}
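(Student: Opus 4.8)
The plan is to combine the Gelfand-width lower bound recalled immediately above with one of the sufficient recovery conditions collected in Table~\ref{tab:RIP-and-algorithm}. The only mismatch to resolve is that the threshold $1/\sqrt3$ is attached to $\delta_{3s}$ (the IHT/HTP row), whereas the hypothesis of the corollary controls $\delta_s$. I would bridge this gap using the monotonicity of RIP constants in the order, namely $\delta_{s_1}(\vect A)\le\delta_{s_2}(\vect A)$ whenever $s_1\le s_2$ (any $s_1$-sparse vector is $s_2$-sparse, so the relevant supremum/infimum is taken over a larger set). This lets me descend to a slightly smaller sparsity level at which the recovery guarantee becomes applicable, and it is exactly this descent that turns the statement into the promised ``stronger form.''

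Concretely, set $s'=\lfloor s/3\rfloor$. Since $3s'\le s$, monotonicity gives
\[\delta_{3s'}(\vect A)\le\delta_s(\vect A)<\frac1{\sqrt3}.\]
By the IHT/HTP row of Table~\ref{tab:RIP-and-algorithm} this enforces robust recovery of all $s'$-sparse vectors: there is a decoder $\Delta\colon\mathbb R^n\to\mathbb R^N$ (the IHT or HTP map run against $\vect A$ with target sparsity $s'$) and a constant $C_1$ such that, in the noiseless specialization $\tau=0$ of the robust bound,
\[\|\vect x-\Delta(\vect A\vect x)\|_2\le C_1\frac{\sigma_{s'}(\vect x)_1}{\sqrt{s'}}\le C_1\,\sigma_{s'}(\vect x)_1\qquad\text{for all }\vect x.\]
This pair $(\vect A,\Delta)$ is precisely of the form required by the Gelfand-width argument at sparsity level $s'$, so that argument yields $n\ge c\,s'\log({\rm e}N/s')$. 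I would then convert back to $s$: for $s$ not too small $s'=\lfloor s/3\rfloor\ge s/6$, while $s'\le s$ forces ${\rm e}N/s'\ge{\rm e}N/s$ and hence $\log({\rm e}N/s')\ge\log({\rm e}N/s)$; combining, the right-hand side is at least a constant multiple of $s\log({\rm e}N/s)$, which is the claim.

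I do not anticipate a genuine obstacle, since the content is a short reduction; the care is in bookkeeping rather than ideas. The one point to track honestly is how the constant $c$ inherits its dependence on $\delta_s$: it enters through the IHT/HTP recovery constant $C_1$, which degrades as $\delta_s\uparrow 1/\sqrt3$, so the phrase ``$c$ depending only on $\delta_s$'' is faithful; the degenerate regime of very small $s$ (where $s'=0$) must be excluded or absorbed into $c$. I would also note that the same descent runs verbatim through the $\ell_1$-minimization row, via $\delta_{2\lfloor s/2\rfloor}\le\delta_s<1/\sqrt3<4/\sqrt{41}$, which recovers $\lfloor s/2\rfloor$-sparse vectors and gives a marginally better constant; I retain IHT/HTP in the main line only because its threshold is literally $1/\sqrt3$, matching the hypothesis exactly.
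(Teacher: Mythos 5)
Your proposal is correct and is essentially the paper's own (implicit) argument: the paper proves this corollary in one line by combining the Gelfand-width bound with Table~\ref{tab:RIP-and-algorithm}, and the sparsity-descent via monotonicity ($\delta_{3\lfloor s/3\rfloor}\le\delta_s$) that you make explicit is exactly what the paper's phrase ``in a stronger form'' is gesturing at. Your added bookkeeping --- tracking how $c$ depends on $\delta_s$ through the IHT/HTP constant, and flagging that the degenerate case $\lfloor s/3\rfloor=0$ must be excluded (a gap present in the paper's own statement, since e.g.\ $s=1$, $n=1$, $\vect A$ an all-ones row gives $\delta_1=0$ yet violates the bound for large $N$) --- only strengthens the writeup.
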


Taking $\varepsilon=1/2$ in \eqref{eqn:classical-sub-Gaussian-bound} 
and comparing \eqref{eqn:classical-sub-Gaussian-bound} and \eqref{eqn:Gelfand-bound}, 
one readily checks that the bound in \eqref{eqn:classical-sub-Gaussian-bound} is optimal 
up to a multiplicative constant. 
These pieces when put together constitutes an almost-conclusive answer 
to the problem of minimal number of measurements required for successful recovery. 
However, two issues remain unsolved: 
i) Theorem~\ref{thm:sub-Gaussian-RIP-upper-bound} 
is a standard application of random matrix theory, 
which is a tool intimate for compressed sensing society, 
while the proof using Gelfand width is indirect and even involves analysis of $\ell_1$-minimization algorithm; 
ii) \eqref{eqn:RIP-upper-bound-prob} gives a probabilistic bound of RIP constants, 
while \eqref{eqn:Gelfand-bound} is deterministic. 
We therefore pose the following questions: 
\emph{
\begin{itemize}
\item Is there a more direct (i.e. random-matrix-theoretic) proof of \eqref{eqn:Gelfand-bound}?
\item Can we find a probabilistic lower bound of RIP constants, 
possibly by bounding the probability $\probP(\delta_s\le\delta)$ for small $\delta$?
\end{itemize}
}
Theorem~\ref{thm:lowerbound} is an affirmative answer to these questions.

\subsection{Concentration of Order Statistics}
Distribution of order statistics is a well-investigated topic; 
the well-known R\'enyi's representation provides an explicit formula for their distribution function \cite{David2004Order}. 
On the other hand, 
concentration of order statistics is still an active field of research \cite{Boucheron2012Concentration}. 
Many researches in this vein were inspired by the concentration of measure phemonenon \cite{ledoux2001concentration}. 
However, general principles in concentration of measure theory do not supply 
satisfactory bounds for order statistics. 
For example, it is known that $\probV(X_{(\lfloor n/2\rfloor)})=O(1/n)$, 
while the powerful logarithmic Sobolev inequality in concentration of measure theory implies 
only $\probV(X_{(\lfloor n/2\rfloor)})\le 1$. 
Additional efforts are in need to establish tight bounds for concentration of order statistics. 

The most notable results on concentration of order statistics 
are a series of inequalities stemming from Glivenko-Cantelli theorem. 
We record here one of such inequalities for reference.
\begin{thm}[Dvoretzky–-Kiefer–-Wolfowitz inequality\footnotemark, \cite{massart1990tight}]\label{thm:DKW}
Let $X_1,\ldots, X_n$ be i.i.d. random variables 
with cumulative distribution function $F(x)$. 
Denote by $F_n$ the associated empirical distribution function, defined by 
\begin{equation*}
  F_n(x)=\frac1n\sum_{i=1}^n\ind_{\{X_i\le x\}}.
\end{equation*}
Then for any $\varepsilon>0$ the following holds:
\begin{equation*}
  \probP\left(\sup_{x\in\mathbb R}|F_n(x)-F(x)|>\varepsilon\right)\le 2\exp(-2n\varepsilon^2).
\end{equation*}
\end{thm}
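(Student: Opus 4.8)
The plan is to prove the inequality in three stages: reduce to the uniform distribution, reduce to a one-sided bound, and then establish the sharp one-sided tail --- the last stage being where the optimal constant $2$ is won. First I would invoke the probability integral transform. Writing $F^{-1}(u) = \inf\{x : F(x) \ge u\}$ for the generalized inverse and letting $U_1,\dots,U_n$ be i.i.d. uniform on $[0,1]$, the variables $X_i = F^{-1}(U_i)$ have distribution function $F$, and a short monotonicity argument gives $\sup_x |F_n(x) - F(x)| \le \sup_{u\in[0,1]} |G_n(u) - u|$, where $G_n$ is the empirical distribution function of the $U_i$. Hence it suffices to bound the tail for the single universal object, the uniform empirical process $G_n(u) - u$, and all dependence on $F$ disappears (for discontinuous $F$ the inequality above is exactly what the coupling delivers, which is all that is needed for an upper bound on the probability).

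Second, I would pass to a one-sided statement. Replacing each $U_i$ by $1-U_i$, which is again uniform, sends $\sup_u (G_n(u)-u)$ to $\sup_u (u - G_n(u))$ in distribution, so the positive and negative parts of the deviation are identically distributed. A union bound then reduces the theorem to proving $\probP\!\left(\sup_u (G_n(u) - u) > \varepsilon\right) \le \exp(-2n\varepsilon^2)$, and the factor $2$ in the statement is precisely the price of this union. Because $G_n(u) - u$ is right-continuous with upward jumps only at the order statistics $U_{(1)} \le \cdots \le U_{(n)}$, the supremum is attained at one of them: $\sup_u (G_n(u) - u) = \max_{1 \le k \le n}\!\left(k/n - U_{(k)}\right)$, so the event of interest becomes the boundary-crossing event $\{\exists k : U_{(k)} < k/n - \varepsilon\}$.

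The heart of the argument --- and the step I expect to be the main obstacle --- is bounding this crossing probability without degrading the exponent. For any fixed $u$, the count $n G_n(u)$ is $\mathrm{Binomial}(n,u)$, and Hoeffding's inequality, which is sharp for $[0,1]$-valued summands, already yields $\probP(G_n(u) - u > \varepsilon) \le \exp(-2n\varepsilon^2)$ with exactly the target constant. The trouble is that a naive union over a grid of values of $u$, or a chaining argument, inflates this constant; recovering the sharp exponent for the \emph{supremum} is the delicate point, and was precisely Massart's contribution. The route I would follow is to condition on the first index $k$ at which the linear boundary $k/n - \varepsilon$ is crossed and to bound the resulting binomial tail by a careful summation, showing that the maximal crossing probability is still dominated by the single-point bound $\exp(-2n\varepsilon^2)$ --- i.e.\ that passing to the supremum costs nothing in the constant. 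An alternative I would keep in reserve is to Poissonize the empirical process and apply a reflection argument to the resulting spatial Poisson process, but the de-Poissonization would then have to be carried out with enough precision to preserve the exact constant, which is itself nontrivial.
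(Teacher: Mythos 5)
A preliminary remark on the comparison itself: the paper offers no proof of Theorem~\ref{thm:DKW} --- it is quoted from \cite{massart1990tight} purely as background in Section~\ref{sec:background} --- so there is no internal argument to measure you against. The benchmark is Massart's original proof, and your outline follows its overall architecture: reduction to the uniform empirical process, reduction to a one-sided statement, and a first-passage analysis of the linear boundary.

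Within that architecture, however, there is a genuine gap, and it sits exactly where the theorem lives. Your first two stages are correct but entirely standard; the whole content of the result is the constant $2$ in the exponent of the one-sided tail, and that step you give only as a plan. Conditioning on the first index $k$ with $U_{(k)}<k/n-\varepsilon$ is indeed how one derives the exact Birnbaum--Tingey/Dempster formula
\begin{equation*}
  \probP\Bigl(\sup_{u\in[0,1]}\bigl(G_n(u)-u\bigr)\ge\varepsilon\Bigr)
  =(1-\varepsilon)^n+\varepsilon\sum_{j=1}^{\lfloor n(1-\varepsilon)\rfloor}\binom{n}{j}\Bigl(\frac{j}{n}+\varepsilon\Bigr)^{j-1}\Bigl(1-\frac{j}{n}-\varepsilon\Bigr)^{n-j},
\end{equation*}
but at this point the naive continuation fails: bounding each summand by a pointwise Hoeffding-type estimate and summing over $j$ inflates the constant, which is the very defect you set out to avoid. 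Showing that this ballot-type sum is at most ${\rm e}^{-2n\varepsilon^2}$ requires the delicate term-by-term analysis (splitting the range of $j$, sharp Stirling-type estimates) that occupies essentially all of \cite{massart1990tight}; ``a careful summation'' names the difficulty rather than resolving it, and the Poissonization--reflection alternative you hold in reserve loses the constant at de-Poissonization, as you yourself note. A second, smaller defect: the one-sided bound you propose to prove for all $\varepsilon>0$ is not what Massart establishes --- his one-sided inequality is proved only in the regime ${\rm e}^{-2n\varepsilon^2}\le\frac12$ --- so your reduction should incorporate the observation that for smaller $\varepsilon$ the two-sided claim is vacuous because $2{\rm e}^{-2n\varepsilon^2}\ge1$. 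As it stands, your proposal is a correct skeleton of the known proof with its analytic heart missing, and it does not yet constitute a proof of Theorem~\ref{thm:DKW}.
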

\footnotetext{
A weaker form of this theorem is well-known to statisticians 
as Kolmogorov-Smirnov test. 
}

In probability theory, 
quantities in the form of $\ind_{\{X\le x\}}$ are (heuristically) considered to be of zero-th order, 
in contrast with first (second, third, \ldots) order quantities such as $X$ ($X^2$, $X^3$, \ldots).
One may thus regard Theorem~\ref{thm:DKW} as a zero-th order joint concentration theorem 
for order statistics. 
Concentration theorem in higher order for any individual 
of order statistics is also available in the literature, c.f. \cite{Boucheron2012Concentration}, 
but is described in a rather complicated form 
which does not imply convenient tail bound. 
On the other hand, first order joint concentration theorem, 
i.e. a tail bound on concentration of $\sum_{i=1}^k X_{(i)}$, is almost unknown. 
Our result may hopefully fill this vacancy.

\section{Proof of Theorem~\ref{thm:lowerbound}}\label{sec:proof-lb}
For convenience we first set up some notations. 
Let 
\begin{align*}
  \sigma^s_{\max}(\vect A)&=\sup_{\vect x\in\Psi_s}\|\vect A\vect x\|,\\
  \sigma^s_{\min}(\vect A)&=\inf_{\vect x\in\Psi_s}\|\vect A\vect x\|.
\end{align*}
By definition we have 
\begin{align}
  \delta_s^+(\mphi)&=\frac{1}{\sqrt n}\sigma_{\max}^s(\vect A)-1,\label{eqn:rip-max-sing-val}\\
  \delta_s^+(\mphi)&=1-\frac{1}{\sqrt n}\sigma_{\min}^s(\vect A).\label{eqn:rip-min-sing-val}
\end{align}

In this section we assume that $\vect A$ is a random matrix with i.i.d. standard Gaussian entries.
For an $N$-dimensional unit vector $\vect v=(v_1,v_2,\ldots,v_N)$, we have
\begin{equation}\label{eqn:Av}
    \|\vect A\vect v\|^2=v_1^2\|\vect a_1\|^2+2v_1\sum_{i=2}^Nv_i\langle \vect a_1,\vect a_i\rangle+\sum_{i,j=2}^Nv_iv_j\langle \vect a_i,\vect a_j\rangle.
\end{equation}

Set 
\begin{align} 
x_i&=\frac{\langle\vect a_1,\vect a_i\rangle}{\|\vect a_1\|},\nonumber\\
\vect b_i&=\vect a_i-x_i\frac{\vect a_1}{\|\vect a_1\|},\label{eq:definebi}
\end{align}
for $i=2,\ldots,N$. 
By a well-known property of Gaussian distribution, 
$\{\vect a_1,x_2,\ldots,x_N\}$ are jointly independent and are all standard Gaussian vectors/variables.
Conditioning on $\vect a_1$, we see that $$\{x_2,\ldots,x_N,\vect b_2,\ldots,\vect b_N\}$$ are jointly independent, 
and $\vect b_i$'s are Gaussian vectors with covariance matrix $${\bf\Sigma}={\bf I}-\frac{\vect a_1\vect a_1^{\rm T}}{\|\vect a_1\|^2}.$$ 

With \eqref{eq:definebi} and the fact that $\langle\vect a_1,\vect b_i\rangle=0$, \eqref{eqn:Av} can be expressed as
\begin{align}\label{eqn:Av-final}
    \|\vect A\vect v\|^2=&v_1^2\|\vect a_1\|^2+2v_1\|\vect a_1\|\sum_{i=2}^Nv_ix_i
    +\sum_{i,j=2}^Nv_iv_j(x_ix_j+\langle\vect b_i,\vect b_j\rangle)\nonumber\\
    =&v_1^2\|\vect a_1\|^2+2v_1\|\vect a_1\|\sum_{i=2}^Nv_ix_i+\left(\sum_{i=2}^Nv_ix_i\right)^2
    +\sum_{i,j=2}^Nv_iv_j\langle\vect b_i,\vect b_j\rangle.
\end{align}

We are interested in extremal values of $\|\vect A\vect v\|^2$ 
as $\vect v$ ranges over the set of $s$-sparse unit vectors. 
To establish the lower bound in Theorem~\ref{thm:lowerbound}, 
we will designate some specific values of $\vect v$ to estimate these extremal values. 
That is, for any specific choice of $s$-sparse unit vectors $\vect v'$, 
we have 
\begin{equation}
  (\sigma_{\max}^s(\vect A))^2\ge\|\vect A\vect v'\|^2,
\end{equation}
and
\begin{equation}
  (\sigma_{\min}^s(\vect A))^2\le\|\vect A\vect v'\|^2.
\end{equation}
In case that $\|\vect A\vect v'\|^2$ is sufficiently close to $\max_{\vect v\in\Psi_s}\|\vect A\vect v\|^2$ 
(resp. $\min_{\vect v\in\Psi_s}\|\vect A\vect v\|^2$)
and is easy to compute, 
the above method will provide a satisfactory lower bound for $\max_{\vect v\in\Psi_s}\|\vect A\vect v\|^2$ 
(resp. $\min_{\vect v\in\Psi_s}\|\vect A\vect v\|^2$).

Next we show how to construct such $\vect v'$. 
Take $v_1'=1/\sqrt2$. 
By the proof of Cauchy-Schwarz inequality, 
there exists a suitable choice of $(v_2',\ldots,v_N')$ which is $(s-1)$-sparse and 
fulfills 
\begin{align*}
    \sum_{i=2}^Nv_i'^2&=\frac12,\\
    \sum_{i=2}^Nv_i'x_i&=\sqrt{\sum_{i=2}^Nv_i'^2}\sqrt{\sum_{j=1}^{s-1}x_{(j)}^2}.
\end{align*}

Moreover, such choice makes $\vect v'$ a $(x_2,\ldots,x_N)$-measurable random vector. 
Combining the above equations, we have
\begin{align}
    \|\vect A\vect v'\|^2=&\frac12\|\vect a_1\|^2+\|\vect a_1\|\sqrt{\sum_{j=1}^{s-1}x_{(j)}^2}+\frac12\sum_{j=1}^{s-1}x_{(j)}^2
    +\sum_{i,j=2}^Nv_i'v_j'\langle\vect b_i,\vect b_j\rangle.
\end{align}

On the other hand, 
taking $$\vect v''=\left[v_1',-v_2',-v_3',\ldots,-v_N'\right]^{\rm T},$$ 
we have
\begin{align}
    \|\vect A\vect v''\|^2=&\frac12\|\vect a_1\|^2-\|\vect a_1\|\sqrt{\sum_{j=1}^{s-1}x_{(j)}^2}+\frac12\sum_{j=1}^{s-1}x_{(j)}^2
    +\sum_{i,j=2}^Nv_i'v_j'\langle\vect b_i,\vect b_j\rangle.
\end{align}

A lower bound (resp. upper bound) of $\|\vect A\vect v'\|^2$ (resp. $\|\vect A\vect v''\|^2$), 
hence of $\max_{\vect v\in\Psi_s}\|\vect A\vect v\|^2$ (resp. $\min_{\vect v\in\Psi_s}\|\vect A\vect v\|^2$), 
is obtained immediately as a consequence 
of Proposition~\ref{prop:gaussian-order-stat} 
and standard concentration inequalities for Gaussian quadratic forms. 
In fact, from Bernstein inequality we see that for some universal constant $C>0$ 
and any $\varepsilon>0$, the following holds:
$$
    \probP\left(\frac1n\Big|\|\vect a_1\|^2-n\Big|>\varepsilon\right)
    \le2\exp(-Cn\min(\varepsilon,\varepsilon^2)).
$$
Similarly, setting 
\begin{equation*}
    \Gamma_{\varepsilon}\overset{\triangle}{=}\left\{\frac1n\left|\sum_{i,j=2}^Nv_i'v_j'\langle \vect b_i,\vect b_j\rangle-(n-1)(1-v_1^2)\right|>\varepsilon\right\},
\end{equation*}
and by conditional independence of $\vect v'$ and $\vect b_i$, we have
\begin{align*}
    \probP\left(\Gamma_{\varepsilon}\Big|\vect a,v_1',\ldots,v_n'\right)
    \le&2\exp\left[-Cn\min\left(\frac{\varepsilon}{1-v_1^2},\frac{\varepsilon^2}{(1-v_1^2)^2}\right)\right]\nonumber\\
    \le&2\exp(-Cn\min(\varepsilon,\varepsilon^2)),
\end{align*}
which follows from Hanson-Wright inequality \cite{rudelson2013hanson}.
Thus by integrating we obtain
\begin{equation}\label{eqn:Av_minor_quadratic_term}
    \probP(\Gamma_{\varepsilon})\le 2\exp(-Cn\min(\varepsilon, \varepsilon^2)).
\end{equation}

Applying Proposition~\ref{prop:gaussian-order-stat} to $\sum_{j=1}^{s-1}x_{(j)}^2$ 
and taking into account the foregoing arguments, 
the conclusion of Theorem~\ref{thm:lowerbound} follows immediately.

\section{Proof of Proposition~\ref{thm:lowerbound}}\label{sec:proof-unimportant}

The proof is a slight modification of the standard one. 
From the definition of $\sigma_{\max}^s$ and $\sigma_{\min}^s$,
it is readily verified that
\begin{prop} \label{Lip_ineq}
$\sigma_{\max}^s(\vect A), \sigma_{\min}^s(\vect A)$ is $1$-Lipschitz in $\vect A$. 
In other words, for $\vect A_1,\vect A_2 \in\mathbb{R}^{n\times N}$, we have
\begin{gather}
\sigma_{\max}^s(\vect A_1) - \sigma_{\max}^s(\vect A_2)\le \left\| \vect A_1-\vect A_2 \right\|_{\rm F},\label{eqn:lipschitz1} \\
\sigma_{\min}^s(\vect A_1) - \sigma_{\min}^s(\vect A_2)\le \left\| \vect A_1-\vect A_2 \right\|_{\rm F}.\label{eqn:lipschitz2}
\end{gather}
where $\|\cdot\|_{\rm F}$ denotes the Frobenius norm. 
In fact, \eqref{eqn:lipschitz1} and \eqref{eqn:lipschitz2} hold even if Frobenius norm is replaced by operator norm.
\end{prop}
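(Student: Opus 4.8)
The plan is to exploit the fact that $\sigma_{\max}^s$ and $\sigma_{\min}^s$ are, respectively, a pointwise supremum and a pointwise infimum of a family of elementary functions $\vect A\mapsto\|\vect A\vect x\|$ indexed by $\vect x\in\Psi_s$, each of which is $1$-Lipschitz in $\vect A$, and then to invoke the general principle that the supremum (infimum) of a collection of uniformly $L$-Lipschitz functions is again $L$-Lipschitz. This reduces the whole proposition to a single pointwise estimate.

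First I would establish that pointwise estimate. Fix a unit vector $\vect x\in\Psi_s$ and write $f_{\vect x}(\vect A)=\|\vect A\vect x\|$. By the reverse triangle inequality for the Euclidean norm,
\[
|f_{\vect x}(\vect A_1)-f_{\vect x}(\vect A_2)|\le\|(\vect A_1-\vect A_2)\vect x\|\le\|\vect A_1-\vect A_2\|_{\mathrm{op}}\le\|\vect A_1-\vect A_2\|_{\mathrm F},
\]
where the second inequality uses $\|\vect x\|=1$ together with the definition of the operator norm, and the third is the standard domination of the operator norm by the Frobenius norm. I would remark that stopping at the operator norm already delivers the stronger claim made in the last sentence of the proposition.

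Next I would promote this pointwise bound to the extremal quantities. For the supremum, rewrite the estimate as $f_{\vect x}(\vect A_1)\le f_{\vect x}(\vect A_2)+\|\vect A_1-\vect A_2\|_{\mathrm F}\le\sigma_{\max}^s(\vect A_2)+\|\vect A_1-\vect A_2\|_{\mathrm F}$ and take the supremum over $\vect x\in\Psi_s$ on the left-hand side, which yields \eqref{eqn:lipschitz1}. For the infimum, I would instead chain, for each fixed $\vect x$, the inequalities $\sigma_{\min}^s(\vect A_1)\le f_{\vect x}(\vect A_1)\le f_{\vect x}(\vect A_2)+\|\vect A_1-\vect A_2\|_{\mathrm F}$, and then take the infimum over $\vect x$ on the right-hand side to obtain \eqref{eqn:lipschitz2}.

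There is no serious obstacle here; the only point requiring a little care is the direction of the inequalities in the infimum case, where one must first bound $\sigma_{\min}^s(\vect A_1)$ from above by an \emph{arbitrary} $f_{\vect x}(\vect A_1)$ and only afterwards optimize over $\vect x$, rather than attempting to interchange the infimum with the pointwise bound prematurely. Since the whole argument is symmetric in $\vect A_1$ and $\vect A_2$, swapping their roles gives the matching reverse estimates and hence the full $1$-Lipschitz property.
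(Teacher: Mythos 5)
Your proof is correct, and it is precisely the standard verification that the paper has in mind when it states the proposition is ``readily verified'' from the definitions: a pointwise bound $\bigl|\|\vect A_1\vect x\|-\|\vect A_2\vect x\|\bigr|\le\|\vect A_1-\vect A_2\|_{\rm op}\le\|\vect A_1-\vect A_2\|_{\rm F}$ for each unit $\vect x\in\Psi_s$, promoted to the supremum and infimum with the inequality directions handled exactly as you describe. Nothing is missing, and your remark that the operator-norm bound already yields the stronger final claim matches the paper's parenthetical statement.
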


By Proposition~\ref{Lip_ineq} and concentration of measure (Appendix \ref{app:tool}), we have
\begin{gather}
  \probP\left(\sigma_{\max}^s(\vect A) > \probE\sigma_{\max}^s(\vect A)+\varepsilon \right) \le {\rm e}^{-\frac{\varepsilon^2}{2}},\label{eqn:concentration-max-sing-val} \\
  \probP\left(\sigma_{\min}^s(\vect A) < \probE\sigma_{\min}^s(\vect A)-\varepsilon \right) \le {\rm e}^{-\frac{\varepsilon^2}{2}}.\label{eqn:concentration-min-sing-val}
\end{gather}

To prove Proposition~\ref{prop:upperbound}, 
it suffices to bound $\probE\sigma_{\max}^s(\vect A)$ 
and $\probE\sigma_{\min}^s(\vect A)$.
For such purpose we need a well-known comparison lemma for Gaussian process 
depicted in Appendix \ref{app:tool}, Lemma~\ref{lem:slepian-fernique}. 
Specifically, we shall use the following consequence of Lemma~\ref{lem:slepian-fernique}.
\begin{lem}\label{lem:slepian}
Let ${\mathbf g}, {\mathbf h}$ be standard Gaussian random vectors. 
Define two Gaussian processes $X_{\vect u,\vect v}$ and $Y_{\vect u,\vect v}$ 
on the set $\{\vect u\in\mathbb S^{n-1},\vect v\in\Psi_k\}$ as following:
\begin{align*}
X_{\vect u,\vect v} &= \left\langle \vect A\vect u,\vect v \right\rangle, \\
Y_{\vect u,\vect v} &= \left\langle {\vect g},\vect u \right\rangle + \left\langle {\vect h},\vect v \right\rangle,
\end{align*}
then we have
\begin{align*}
\probE\max_{\vect v \in \Psi_k}\max_{\vect u \in \mathbb{S}^{n-1}} X_{\vect u,\vect v} \le \probE\max_{\vect v \in \Psi_k}\max_{\vect u \in \mathbb{S}^{n-1}}Y_{\vect u,\vect v}, \\
\probE\min_{\vect v \in \Psi_k}\max_{\vect u \in \mathbb{S}^{n-1}} X_{\vect u,\vect v} \ge \probE\min_{\vect v \in \Psi_k}\max_{\vect u \in \mathbb{S}^{n-1}}Y_{\vect u,\vect v}.
\end{align*}
\end{lem}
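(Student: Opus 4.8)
The plan is to invoke the Gaussian comparison lemma (Lemma~\ref{lem:slepian-fernique}) from Appendix~\ref{app:tool}, whose conclusion compares expected suprema of two Gaussian processes once their increments are suitably ordered. The two inequalities in the statement are the max-max and the min-max versions of Gordon's comparison theorem, so the whole task reduces to verifying the increment hypotheses of Lemma~\ref{lem:slepian-fernique} for the indexing set $\{\vect u\in\mathbb S^{n-1},\vect v\in\Psi_k\}$. First I would compute the two covariance structures explicitly. For $X_{\vect u,\vect v}=\langle\vect A\vect u,\vect v\rangle$, writing out the entries of the Gaussian matrix $\vect A$ gives
\begin{equation*}
\probE\bigl(X_{\vect u,\vect v}X_{\vect u',\vect v'}\bigr)=\langle\vect u,\vect u'\rangle\,\langle\vect v,\vect v'\rangle,
\end{equation*}
while for $Y_{\vect u,\vect v}=\langle\vect g,\vect u\rangle+\langle\vect h,\vect v\rangle$, independence of $\vect g,\vect h$ yields
\begin{equation*}
\probE\bigl(Y_{\vect u,\vect v}Y_{\vect u',\vect v'}\bigr)=\langle\vect u,\vect u'\rangle+\langle\vect v,\vect v'\rangle.
\end{equation*}

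Next I would form the expected squared increments of the two processes and compare them. Using $\|\vect u\|=\|\vect u'\|=1$ and $\|\vect v\|=\|\vect v'\|=1$ (since $\Psi_k$ consists of unit vectors), the $X$-increment is
\begin{equation*}
\probE\bigl(X_{\vect u,\vect v}-X_{\vect u',\vect v'}\bigr)^2=2-2\langle\vect u,\vect u'\rangle\langle\vect v,\vect v'\rangle,
\end{equation*}
and the $Y$-increment is
\begin{equation*}
\probE\bigl(Y_{\vect u,\vect v}-Y_{\vect u',\vect v'}\bigr)^2=4-2\langle\vect u,\vect u'\rangle-2\langle\vect v,\vect v'\rangle.
\end{equation*}
The key algebraic fact is then that the difference of increments factors as
\begin{equation*}
\probE\bigl(Y_{\vect u,\vect v}-Y_{\vect u',\vect v'}\bigr)^2-\probE\bigl(X_{\vect u,\vect v}-X_{\vect u',\vect v'}\bigr)^2=2\bigl(1-\langle\vect u,\vect u'\rangle\bigr)\bigl(1-\langle\vect v,\vect v'\rangle\bigr)\ge0,
\end{equation*}
which is nonnegative because inner products of unit vectors are bounded by $1$. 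This is exactly the inequality the comparison lemma needs; moreover the factored form shows it degenerates to an equality precisely when $\vect u=\vect u'$ or $\vect v=\vect v'$, which is the refinement required for the min-max (Gordon) direction.

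With the increment ordering established, I would feed it into Lemma~\ref{lem:slepian-fernique}, taking the outer index to be $\vect v\in\Psi_k$ and the inner index to be $\vect u\in\mathbb S^{n-1}$. The max-max inequality follows from the Slepian-Fernique direction, and the min-max inequality follows from the Gordon direction of the same comparison lemma (which is why the equality-on-diagonal behaviour matters). The main obstacle, such as it is, is bookkeeping rather than depth: I must make sure $\Psi_k$ is treated as a subset of the unit sphere so that the normalization $\|\vect v\|=1$ holds uniformly, and I must apply the max/min over the correct coordinate blocks so that the hypotheses of the two directions of Lemma~\ref{lem:slepian-fernique} line up with the asserted conclusions. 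Once the covariance computation and the factorization above are in hand, both inequalities are immediate.
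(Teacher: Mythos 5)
Your proposal is correct and follows essentially the same route as the paper's proof: both verify the increment comparison via the factorization $2(1-\langle\vect u,\vect u'\rangle)(1-\langle\vect v,\vect v'\rangle)\ge0$ (the paper obtains it by expanding $\sum_{i,j}|u_iv_j-u_i'v_j'|^2$, you via the covariance functions, which is the same computation), note the equality of increments on same-$\vect v$ slices, and then invoke the Slepian--Fernique corollary for the max-max bound and Gordon's min-max comparison (Lemma~\ref{lem:slepian-fernique}) for the min-max bound. No gap; the only detail left implicit is that the min-max direction requires applying Lemma~\ref{lem:slepian-fernique} with the roles of the two processes swapped, which your diagonal-equality observation justifies.
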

\begin{proof}
  See Section \ref{app:tool}.
\end{proof}

\subsection{Upper Bound of $\probE\sigma_{\max}^s(\vect A)$}

By Lemma~\ref{lem:slepian}, we have
\begin{align}
\probE\sigma_{\max}^s(\vect A)=&\probE\max_{\vect v \in \Psi_s}\max_{\vect u \in \mathbb{S}^{n-1}}\vect u^{\rm T} \vect A \vect v\nonumber \\
\le& \probE  \max_{\vect u \in \mathbb{S}^{n-1}}  \left\langle {\mathbf g},\vect u \right\rangle + \probE  \max_{\vect v \in \Psi_s} \left\langle {\mathbf h},\vect v \right\rangle\nonumber \\
=& \probE \left\|{\vect g}\right\| + \probE \left\|{\vect h}_k^\sharp\right\|\label{eqn:slepian-upper}
\end{align}
where $$\vect h_k^\sharp=[h_{(1)},\ldots,h_{(k)},0,\ldots,0]^{\rm T}\in\mathbb{R}^{N}$$
is a $k$-sparse vector obtained from $\vect h$ 
by expunging all entries of $\vect h$ except those $k$ entries with largest magnitudes. 
Here $h_{(i)}$ denotes the entry in $\vect h$ with $i$-th largest magnitude. 
Note that $h_{(i)}^2$ can be viewed as the $i$-th order statistic of i.i.d. 
$\chi_1^2$ random variables $h_1^2,\ldots,h_n^2$. 
Thus $$\probE\|\vect h_k^\sharp\|=\probE\sqrt{\sum_{i=1}^k h_{(i)}^2}$$ can be bounded 
with Proposition~\ref{prop:gaussian-order-stat}. 
We have 
\begin{equation}\label{eqn:prop1-ord-bound}
  \probE\sqrt{\sum_{i=1}^k h_{(i)}^2}\le\sqrt{k}T + \frac{C}{\sqrt{\log \frac{N}{k}}}.
\end{equation}

On the other hand, it is well-known that
\begin{equation}
  \probE\|\vect g\|=\frac{\sqrt 2\Gamma(\frac{n+1}2)}{\Gamma(\frac n2)},
\end{equation}
and by Kazarinoff's inequality for binomial coefficients \cite{qi2012bounds} (or simply by Cauchy-Schwarz) we have
\begin{equation}
  \probE\|\vect g\|\le\sqrt n.
\end{equation}

Thus 
\begin{equation}\label{eqn:max-sing-val-expectation}
  \probE\sigma_{\max}^s(\vect A)\le\sqrt n+\sqrt{k}T + \frac{C}{\sqrt{\log \frac{N}{k}}}.
\end{equation}

\subsection{Lower Bound of $\probE\sigma_{\min}^s(\vect A)$}

Similar to \eqref{eqn:slepian-upper}, one may invoke Lemma~\ref{lem:slepian} to obtain
\begin{align}
\probE\sigma_{\min}^s(\vect A)=&\probE\min_{\vect v \in \Psi_s}\max_{\vect u \in\mathbb{S}^{n-1}}\vect u^{\rm T} \vect A \vect v\nonumber \\
\ge&\probE\max_{\vect u \in \mathbb{S}^{n-1}}\left\langle {\mathbf g},\vect u \right\rangle + \probE\min_{\vect v \in \Psi_s} \left\langle {\mathbf h},\vect v \right\rangle\nonumber \\
=& \probE \left\|{\vect g}\right\| - \probE \left\|{\vect h}_k^\sharp\right\|.\label{eqn:prop1-min-sing-val}
\end{align}
Now by Kazarinoff's inequality we have 
\begin{equation}\label{eqn:Kazarinoff-lb}
\probE \left\|{\vect g}\right\|\ge \sqrt{n-\frac12}.
\end{equation}
Combining \eqref{eqn:prop1-ord-bound}, \eqref{eqn:prop1-min-sing-val}, and \eqref{eqn:Kazarinoff-lb}, 
we have
\begin{equation}\label{eqn:min-sing-val-expectation}
\probE\sigma_{\min}^s(\vect A)\ge\sqrt{n-\frac12}-\sqrt{k}T - \frac{C}{\sqrt{\log \frac{N}{k}}}.
\end{equation}
Note that $\sqrt{n-\frac12}\ge\sqrt n-\frac{1}{2\sqrt n}$. The conclusion of Proposition~\ref{prop:upperbound} then follows from 
\eqref{eqn:rip-max-sing-val}, \eqref{eqn:rip-min-sing-val}, 
and
\eqref{eqn:concentration-max-sing-val}, \eqref{eqn:concentration-min-sing-val}, \eqref{eqn:max-sing-val-expectation}, \eqref{eqn:min-sing-val-expectation}.


\section{Concentration of order statistics}\label{sec:proof}

In this section we will prove Proposition~\ref{prop:gaussian-order-stat}. 
The quantity of interest here is the sum $\sum_{i=1}^{k}X_{(i)}$, 
where $X_{(1)}\ge\ldots\ge X_{(n)}$ are the order statistics of i.i.d. non-negative random variables 
$X_1,\ldots,X_n$. 
We begin with a concentration inequality for $X_{(k)}$ 
which can be regarded as a local version of Theorem~\ref{thm:DKW}. 
\begin{lem}\label{lem:ord_concentration}
Let $X_1, \ldots, X_n$ be a sequence of i.i.d. non-negative random variables 
with distribution function $F$, 
and $X_{(1)}, \ldots, X_{(n)}$ be theirs order statistics. 
Denote $\alpha=k/n<1/2$. 
Assume $t^-$, $t^+$ are positive real numbers such that
\begin{gather*}
  \probP(X_1>t^-)=\alpha+\delta,\\
  \probP(X_1>t^+)=\alpha-\delta,
\end{gather*}
where $\delta$ is a small positive constant, e.g. $\delta\in(1-\alpha,\alpha)$.
Then 
\begin{gather}
  \probP(X_{(k+1)}\le t^-)\le\exp\left(-n\delta^2\left(\frac1\alpha+\frac1{1-\alpha}\right)\log\frac{\rm e}2\right),\label{eqn:ord_lowerbound} \\
  \probP(X_{(k-1)}>t^+)\le\exp\left(-n\delta^2\left(\frac1\alpha+\frac1{1-\alpha}\right)\log\frac{\rm e}2\right).\label{eqn:ord_upperbound} 
\end{gather}
In particular, since $X_{(k+1)}\le X_{(k)}\le X_{(k-1)}$, we have
\begin{align*}
  \probP(t^-<X_{(k)}\le t^+)
  \ge1-2\exp\left(-n\delta^2\left(\frac1\alpha+\frac1{1-\alpha}\right)\log\frac{\rm e}2\right).
\end{align*}
\end{lem}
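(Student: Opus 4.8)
The plan is to reduce each of the two tail events involving order statistics to a tail event for a single binomial random variable, and then to apply an exponential (Chernoff-type) bound to the latter. The key observation is that the position of an order statistic relative to a fixed threshold is entirely governed by the \emph{number} of samples lying above that threshold. Concretely, I would set $B^- = \sum_{i=1}^n \ind_{\{X_i > t^-\}}$ and $B^+ = \sum_{i=1}^n \ind_{\{X_i > t^+\}}$. Since the summands are i.i.d. Bernoulli variables, the defining relations $\probP(X_1 > t^-) = \alpha+\delta$ and $\probP(X_1 > t^+) = \alpha-\delta$ make $B^-$ a $\mathrm{Bin}(n,\alpha+\delta)$ variable and $B^+$ a $\mathrm{Bin}(n,\alpha-\delta)$ variable.

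First I would establish the exact event identities
\[
  \{X_{(k+1)} \le t^-\} = \{B^- \le k\}, \qquad \{X_{(k-1)} > t^+\} = \{B^+ \ge k-1\}.
\]
Both follow from the non-increasing arrangement $X_{(1)} \ge \cdots \ge X_{(n)}$: the $(k+1)$-th largest sample lies at or below $t^-$ exactly when at most $k$ samples exceed $t^-$, and the $(k-1)$-th largest strictly exceeds $t^+$ exactly when at least $k-1$ samples exceed $t^+$. Recalling $k=\alpha n$, the first becomes a lower-tail event for $B^-$ (mean $(\alpha+\delta)n$, deviation $\delta n$ below the mean) and the second an upper-tail event for $B^+$ (mean $(\alpha-\delta)n$, deviation $\delta n - 1$ above the mean). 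It then remains to bound these two binomial tails, which I would do via the exponential Markov inequality $\probP(B^- \le k) \le e^{\lambda k}\,\probE e^{-\lambda B^-}$, using $\probE e^{-\lambda B^-} = (1-(\alpha+\delta)(1-e^{-\lambda}))^n$ and optimizing over $\lambda>0$; this yields the Cram\'er rate $\exp(-n D(\alpha\,\|\,\alpha+\delta))$ with binomial relative entropy
\[
  D(a\|p) = a\log\tfrac ap + (1-a)\log\tfrac{1-a}{1-p},
\]
and the $B^+$ tail is handled by the mirror-image argument.

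The main obstacle, and the only genuinely delicate step, is to pass from the relative-entropy rate to the clean closed form claimed in the lemma, i.e. to prove the inequality
\[
  D(\alpha\,\|\,\alpha+\delta) \ge \delta^2\Bigl(\tfrac1\alpha + \tfrac1{1-\alpha}\Bigr)\log\tfrac{\mathrm e}2
\]
uniformly over the admissible range $0<\delta<\alpha$ (and its counterpart for $B^+$). A second-order expansion gives $D(\alpha\|\alpha+\delta) = \tfrac12\delta^2(\tfrac1\alpha+\tfrac1{1-\alpha}) + O(\delta^3)$, so the leading constant $\tfrac12$ comfortably exceeds $\log\tfrac{\mathrm e}2 = 1-\log 2 \approx 0.31$; the real work is to show that the inequality survives across the entire range of $\delta$ rather than merely near $\delta=0$. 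I would carry this out by writing $D$ as $\delta\bigl[\tfrac{-\log(1-\delta/(1-\alpha))}{\delta/(1-\alpha)} - \tfrac{\log(1+\delta/\alpha)}{\delta/\alpha}\bigr]$ and bounding each logarithmic factor by a carefully chosen monotone/convex surrogate, the constant $\log\tfrac{\mathrm e}2$ being precisely the value that leaves enough slack everywhere. The same slack also absorbs the harmless off-by-one in the $B^+$ deviation $\delta n - 1$.

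Finally, having the two one-sided bounds, I would combine them through the sandwich $X_{(k+1)} \le X_{(k)} \le X_{(k-1)}$: the event $\{X_{(k)}\le t^-\}$ is contained in $\{X_{(k+1)}\le t^-\}$ and $\{X_{(k)}>t^+\}$ in $\{X_{(k-1)}>t^+\}$, so a union bound over the complementary events yields the two-sided estimate on $\probP(t^- < X_{(k)} \le t^+)$ stated at the end of the lemma.
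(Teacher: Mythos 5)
Your proposal is correct and follows essentially the same route as the paper: reduce each order-statistic event to a binomial tail via the indicator count $\sum_i \ind_{\{X_i>t^\pm\}}$, apply the Chernoff/Cram\'er entropy bound $\exp(-nD(\alpha\Vert\alpha\pm\delta))$, lower-bound the relative entropy by $\delta^2\bigl(\tfrac1\alpha+\tfrac1{1-\alpha}\bigr)\log\tfrac{\mathrm e}{2}$, and finish with the sandwich $X_{(k+1)}\le X_{(k)}\le X_{(k-1)}$ and a union bound. The ``surrogate'' you gesture at for the entropy step is exactly the paper's elementary inequality $\log(1+x)\le x-\bigl(\log\tfrac{\mathrm e}{2}\bigr)x^2$ for $x\in(-1,1)$, applied to both logarithmic terms of $D$.
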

\begin{proof}
Observe that $X_{(k+1)}\le t^-$ is equivalent to $\sum_{i=1}^n\ind_{\{X_i>t^-\}}\le k.$ 
Since $\ind_{\{X_i>t^-\}}$ are i.i.d. Bernoulli variables with 
\begin{align*}
  \probP(\ind_{\{X_i>t^-\}}=1)&=\probP(X_i>t^-)\\ &=\alpha+\delta,
\end{align*}
we see that $\sum_{i=1}^n\ind_{\{X_i>t^-\}}$ follows binomial distribution $B(n,\alpha+\delta)$. 
Thus one may apply classical entropy-type tail bounds (see \cite{cover2012elements} for example) 
to obtain 
\begin{equation}\label{eqn:lem_ord_concentration_1}
  \probP\left(\sum_{i=1}^n\ind_{\{X_i>t^-\}}\le k\right)\le\exp(-nD(\alpha\Vert \alpha+\delta)),
\end{equation}
where $D(\alpha\Vert\alpha+\delta)$ is the relative entropy given by 
\begin{equation*}
  D(\alpha\Vert\alpha+\delta)=\alpha\log\frac{\alpha}{\alpha+\delta}+(1-\alpha)\log\frac{1-\alpha}{1-\alpha-\delta}.
\end{equation*}
Equivalently, we have
\begin{equation*}
  D(\alpha\Vert\alpha+\delta)=-\alpha\log\left(1+\frac{\delta}{\alpha}\right)-(1-\alpha)\log\left(1-\frac{\delta}{1-\alpha}\right). 
\end{equation*}
For $x\in(-1,1)$, we have the following elementary inequality:
\begin{equation*}
  \log(1+x)\le x-\left(\log\frac {\rm e}2\right)x^2,
\end{equation*}
hence 
\begin{equation}\label{eqn:lem_ord_concentration_2}
  D(\alpha\Vert\alpha+\delta)\ge\delta^2\left(\frac1{\alpha}+\frac{1}{1-\alpha}\right)\log\frac {\rm e}2.
\end{equation}

Plugging (\ref{eqn:lem_ord_concentration_2}) into (\ref{eqn:lem_ord_concentration_1}) 
completes the proof of (\ref{eqn:ord_lowerbound}). 
The proof of (\ref{eqn:ord_upperbound}) is verbatim.
\end{proof}
\begin{rem}
  The quantile $t^-$ and $t^+$ in Lemma~\ref{lem:ord_concentration} does exist for any (reasonable) $\delta$
when the distribution $F$ is continous on $[0,\infty)$. 
Assume further that $F$ is absolutely continous on $[0,\infty)$ and $f$ is its density, 
then 
\begin{equation}\label{eqn:deviation-and-density}
  \probP(|X_{(k)}-t|>\varepsilon)\le 2\exp\left(-Kn\delta^2(\alpha^{-1}+(1-\alpha)^{-1})\right),
\end{equation}
where $t$ is such that $$\probP(X_1>t)=1-\alpha,$$ and $$K=\left(\log\frac {\rm e}2\right)\inf_{|x-t|\le\varepsilon}f^2(x).$$
\end{rem}

The proof proceeds as following. 
Note that
$$
|T_k-T| \le |T_k - \probE  T_k| + \left|\probE  T_k - \sqrt{\probE  T_k^2}\right| + \left|\sqrt{\probE  T_k^2} -T\right|.
$$
We will analyze the three terms in the right hand side 
to prove the required concentration inequality for $|T_k-T|$.
The bound of $|\probE T_k-T|$ is obtained as a byproduct.

It follows from rearrangement inequality that $T_k$ is a $\sqrt{\frac1k}$-Lipschitz function 
in $\vect X=(X_1,\ldots,X_n)$. 
By concentration of measure (Appendix \ref{app:tool}), we have
\begin{equation}\label{eqn:T-concentration}
  \probP(|T_k-\probE T_k| > \varepsilon) \le 2{\rm e}^{-\frac{k\varepsilon^2}{2}}, \quad \forall \varepsilon>0.
\end{equation}
and consequently
\begin{equation}
  \probP(|T_k^2-(\probE T_k)^2|>t^2+2t\probE T_k)\le 2{\rm e}^{-\frac{kt^2}2}.
\end{equation}

To bound $\probE T_k-\sqrt{\probE T_k^2}$, we instead inspect $\probE T_k^2-(\probE T_k)^2$.
Let 
\begin{equation*}
  f(x) = \frac{1}{\sqrt{2\pi x}}{\rm e}^{-\frac{x}{2}}
\end{equation*}
and $F(x)$ be the p.d.f. and the c.d.f. of $\chi_1^2$ distribution respectively. 

Observe that
\begin{align}
\left|\probE T_k^2 - (\probE T_k)^2\right| \le& \probE \left|T_k^2 - (\probE T_k)^2\right|\nonumber \\
=& \int_{0}^{\infty}\mathbb{P}\left(\left|T_k^2 - (\probE T_k)^2\right| > x\right)\mathrm{d}x\nonumber \\
\le& \int_{0}^{\infty} \mathbb{P}\left(\left|T_k^2 - (\probE T_k)^2\right| > x\right)\mathrm{d}x\nonumber \\
\le& \int_{0}^{\infty} 2{\rm e}^{-\frac{kx^2}{2}}\mathrm{d}(x^2+2x\probE T_k)\nonumber \\
=&\frac4k + \frac{2\sqrt{2\pi}\probE T_k}{\sqrt{k}}.
\end{align}

Next, we bound the difference between $\probE T_k^2$ and $T^2$.

Given $X_{(k + 1)}^2$, $T_k^2$ can be seen as a function of $k$ i.i.d. random variable $Y_1, \ldots, Y_k$ with $Y_i \sim X^2\big| X^2 > X_{(k + 1)}^2$, 
where $T_k^2 = \frac{1}{k}\sum_{i=1}^k Y_k$ and $X^2 \sim \chi_1^2$,
then the conditional expectation of $T_k^2$ is
\begin{align}
\probE\left(T_k^2 \big| X_{(k + 1)}^2\right)&= \probE  \frac{1}{k}\sum_{i=1}^k Y_i\nonumber \\
&= \probE  Y_1\label{eqn:ord-stat-condition-expectation}.
\end{align}
We inspect the quantity $\probE(X^2|X^2>s)$. By straight-forward computation, we have
\begin{align}
\probE \left(X^2 \big| X^2>s\right) 
=& \frac{\int_{s}^{\infty}xf(x)\mathrm{d}x}{\int_{s}^{\infty}f(x)\mathrm{d}x}\nonumber\\
=& \frac{s}{\probE\sqrt{\frac{s}{s+Y}}}+1,\label{eqn:chi-square-conditional-expectation} 
\end{align}
where $Y\sim\mathrm{Exp}(\frac{1}{2})$. 
In fact, integrating by parts leads to
\begin{align*}
\int_{s}^{\infty}xf(x)\mathrm{d}x &= \int_{s}^{\infty}\sqrt{\frac{x}{2\pi}}{\rm e}^{-\frac{x}{2}}\mathrm{d}x\nonumber \\
&= \sqrt{\frac{2s}{\pi}}{\rm e}^{-\frac{s}{2}} + \int_{s}^{\infty}f(x)\mathrm{d}x,
\end{align*}
while the change of variable yields
\begin{align*}
\int_{s}^{\infty}f(x)\mathrm{d}x &= \int_{s}^{\infty}\sqrt{\frac{1}{2\pi x}}{\rm e}^{-\frac{x}{2}}\mathrm{d}x \\
&=  \sqrt{\frac{2}{\pi s}}{\rm e}^{-\frac{s}{2}} \int_{0}^{\infty}\sqrt{\frac{s}{s+x}}\frac{1}{2}{\rm e}^{-\frac{x}{2}}\mathrm{d}x.
\end{align*}
From \eqref{eqn:chi-square-conditional-expectation} it is obvious that 
\begin{equation}\label{eqn:T-square-greater-than-t}
  T^2=\probE(X^2\big |X^2>t)>t
\end{equation}
and
\begin{equation}
\left|\probE\left(X^2 \big| X^2 > s_1\right) - \probE\left(X^2 \big| X^2 > s_2\right)\right|\le C|s_1 - s_2|
\end{equation}
for some positive constant $C$ when $s_1$, $s_2$ is bounded away from zero.
In particular, we have
\begin{align}
\left|\probE  T_k^2 - T^2\right| 
=& \left|\probE  \left(\probE  T_k^2 \big| X_{(k + 1)}^2\right) - T^2\right|\nonumber \\
=& \left|\probE  \left(\probE  T_k^2 \big| X_{(k + 1)}^2\right) - \probE  \left(T_k^2\big| X_{(k + 1)}^2 = t\right)\right|\nonumber \\
\le &C \probE |X_{(k + 1)}^2 - t| \\
\le & C\sqrt{\frac{t}{k\log \frac{n}{k}}},
\end{align}
where  the last inequality follows from \cite{Boucheron2012Concentration} and Lemma~\ref{lem:ord_concentration}.
This proves
\begin{equation}
    |\probE  T_k - T| \le \frac{C\sqrt t}{T\sqrt{k\log \frac{n}{k}}}.
\end{equation}
But \eqref{eqn:T-square-greater-than-t} implies $T>\sqrt t$, thus
\begin{equation}
  |\probE  T_k - T| \le \frac{C}{\sqrt{k\log \frac{n}{k}}}.
\end{equation}

The proof is concluded by invoking \eqref{eqn:T-concentration} to obtain
\begin{equation} 
    \probP\left( |T_k-T| > \frac{C}{\sqrt{k\log \frac{n}{k}}} + \varepsilon \right) \le 2{\rm e}^{-\frac{k\varepsilon^2}{2}}.
\end{equation}


\section{Discussions}\label{sec:discussion}
In this section we discuss the extension of Theorem~\ref{thm:lowerbound} 
to sub-Gaussian matrices 
and that of Proposition~\ref{prop:gaussian-order-stat} to general distributions. 
Revisiting the proof of Theorem~\ref{thm:lowerbound}, 
one finds that there are two places where being Gaussian is critical:
\begin{enumerate}
  \item Applying Proposition~\ref{prop:gaussian-order-stat}.
  \item Bounding $\sum_{i,j=2}^N v_i'v_j'\langle\vect b_i,\vect b_j\rangle$ by independence and Hanson-Wright inequality.
\end{enumerate}

To resolve these issues, 
we first discuss the extension of Proposition~\ref{prop:gaussian-order-stat} to non-Gaussian cases. 
It is evident that the use of concentration of measure in proving Proposition~\ref{prop:gaussian-order-stat} 
is redundant: 
one may instead try to bound $T_k-T$ and $\probE T_k-T$ by utilizing Lemma~\ref{lem:ord_concentration} only. 
The main difficulty arising in doing so is that \eqref{eqn:deviation-and-density} is no longer applicable 
and  $$T_k^2\Big|X_{(k+1)}\overset{(d)}{=}\frac 1k\sum Y_i$$ no longer holds 
for discontinuous distributions as in \eqref{eqn:ord-stat-condition-expectation}. 
However, if $F$ is close to chi-squared distribution, hence has small jumps, 
it is still possible to control the deviation probability in \eqref{eqn:deviation-and-density} by $\delta$. 
As for \eqref{eqn:ord-stat-condition-expectation}, 
we show a simple trick that reduces the situation to continous case. 
Let $G$ be the generalized inverse function of $F$, defined by 
\begin{equation*}
  G(x)=\inf\{t:F(t)\ge x\}
\end{equation*}
and let $U_1,\ldots,U_n$ be i.i.d. random variables following uniform distribution on $[0,1]$. 
Then 
\begin{equation*}
  \left(X_{(1)},\ldots,X_{(n)}\right)\overset{(d)}=\left(G\left(U_{(1)}\right),\ldots, G\left(U_{(n)}\right)\right).
\end{equation*}
Note that $G$ is monotonically increasing. 
Hence $G\left(U_{(k+1)}\right)=s$ is equivalent to $U_{(k+1)}\in[\alpha,\beta]$ for some $0\le\alpha\le\beta\le 1$. 
Thus $T_{(k)}^2\big|\{X_{(k+1)}=s\}$ has the same distribution as $\sum G^2\left(U_{(i)}\right)\big|\{U_{(k+1)}\in[\alpha,\beta]\}$. 
Since the distribution of $U_{(i)}$ is continuous, 
what we desire follows from controlling $\sum G^2\left(U_{(i)}\right)\big|U_{(k+1)}$ 
by the same argument in Proposition~\ref{prop:gaussian-order-stat} and integrating. 

In a word, we have sketched a proof that: 
\emph{Proposition~\ref{prop:gaussian-order-stat} holds with slight modification for distributions 
that are sufficiently close to chi-squared.} 
By central limit theorem, 
the distribution of $x_i$ defined after \eqref{eqn:Av} converges to Gaussian, 
and consequently the distribution of $x_i^2$ converges to $\chi_1^2$. 
This implies that Proposition~\ref{prop:gaussian-order-stat} remains valid for sub-Gaussian matrices, 
hence resolving the issue 1) above.

Issue 2) is more delicate and requires some modifications 
in the statement of Theorem~\ref{thm:lowerbound} to generalize to sub-Gaussian matrices. 
To avoid unnecessary technicalities, 
we will not explicitly state these modifications here, 
but instead explain the main strategy for such generalization as follows.
It follows from our construction that $(v_2',\ldots,v_n')$ is $(s-1)$-sparse, 
hence we have 
\begin{align*}
  \frac1n\left|(n-1)(1-v_1^2)-\sum_{i,j=2}^N v_i'v_j'\langle\vect b_i,\vect b_j\rangle\right|
  \le\left(2\delta_{s-1}(\vect B)+\delta_{s-1}^2(\vect B)\right)(1-v_1^2),
\end{align*}
where $$\vect B=\frac{1}{\sqrt n}[\vect b_2,\ldots,\vect b_N]\in\mathbb R^{n\times(N-1)}.$$ 
Note that $\vect B$ is a submatrix of $\vect P\vect A$, 
where $\vect P$ is the orthogonal projection onto the orthogonal complement of $\vect a_1$. 
It follows from, for instance, Cauchy interlacing law that $\delta_{s-1}(\vect B)\le\delta_s(\vect A)$. 
For sub-Gaussian random matrices $\vect A$, 
upper bounds of $\delta_s(\vect A)$ are well-known (Theorem~\ref{thm:sub-Gaussian-RIP-upper-bound}). 
Thus the term $\sum_{i,j=2}^N v_i'v_j'\langle\vect b_i,\vect b_j\rangle$ 
is of the same order as $T_k$ with overwhelming probability. 
To get rid of such weakness, one may take $v_1$ to be properly close to $1$, 
which yields faster decay of the term $(1-v_1^2)\delta_{s-1}(\vect B)$ than 
the decay of $\sqrt{1-v_1^2}\sqrt{\sum_{j=1}^{s-1}x_{(j)}^2}$, 
so the error term will be eventually dominated by $\sqrt{1-v_1^2}\sqrt{\sum_{j=1}^{s-1}x_{(j)}^2}$. 
The error term this approach gives is actually worse than that in Theorem~\ref{thm:lowerbound} by a constant, 
which should not be a serious concern in most applications, for instance, in Corollary~\ref{cor:no.-of-measurements}. 


\section{Numerical Experiments}\label{sec:simulations}
We present some numerical calculations in this section to 
provide an intuition of our new lower and upper bounds for the RIP constant 
described in Theorem~\ref{thm:lowerbound} and Proposition~\ref{prop:upperbound}.
We fix sparsity levels $s/N = (0.1, 0.01, 0.001), N = 10000$ and compute the $99\%$-confidence intervals 
for different compression rates, see Figure~\ref{fig:fig1}.
We find that when the compression rate $N/n$ is not too large, 
our new bounds are quite tight compared to previous ones.

\begin{figure}[t]
\centering
\centerline{\includegraphics[width=9.5cm]{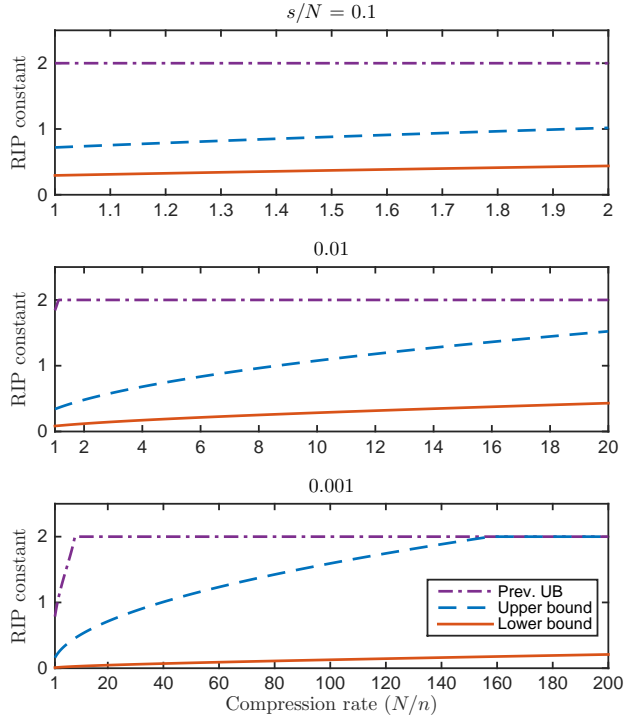}}
\caption{Plot of the lower bound and the upper bound of the RIP constant derived in this paper 
comparing with previous results. 
The $x$-coordinate is compression rate. 
The sparsity level $s/N$ is taken to be $0.1$, $0.01$, $0.001$ respectively in three figures.
Parameters $N$ are fixed to $1000$, and confidence level is $0.99$.
All curves are truncated above by $2$ for visual convenience.}
\label{fig:fig1}
\end{figure}


\section{Conclusion}\label{sec:conclusion}
In this paper we gave a lower bound of RIP constants for Gaussian random matrices 
and discussed the strategy to generalize this result to sub-Gaussian random matrices. 
For Gaussian random matrices, this bound was shown to be tight 
by comparing with a new upper bound of RIP constants
which is better than previous results by a multiplicative constants in most cases. 
The new lower bound of RIP constants implies the fact 
that the minimal number of sub-Gaussian measurements 
needed to enforce the RIP of order $s$ is $\Omega(s\log({\rm e}N/s))$, 
which was proved in literature by with a much more sophisticated approach. 
In our proof we also established a concentration inequality for sum of top order statistics, 
which we believe to be of interest in other fields besides compressed sensing.


\section{Appendix: Proof of Proposition~\ref{prop:asymp}}\label{app:asymp}
It is well-known that Gaussian tail bound is asymptotically equivalent to $\frac 2x{\rm e}^{-x^2/2}$:
\begin{equation}
  \probP(|X|>x)\sim\frac{2}{x}{\rm e}^{-x^2/2},
\end{equation}
Thus 
\begin{equation}
  \frac sN\sim\probP(X^2>t)\sim\frac{2}{\sqrt t}{\rm e}^{-t/2}.
\end{equation}
Taking logarithms on both sides, we obtain
\begin{equation}
  t+\log t\sim 2\log\frac Ns
\end{equation}
As $s/N\to\infty$, $t$ also tends to infinity, thus $\log t=o(t)$. 
It follows that
\begin{equation}
  t\sim 2\log\frac Ns.
\end{equation}

To prove $T\sim\sqrt{2\log\frac Ns}$, 
we will use \eqref{eqn:chi-square-conditional-expectation}. 
By bounded convergence theorem we see
\begin{equation*}
  \lim_{s\to\infty}\probE\sqrt{\frac{s}{s+Y}}=1.
\end{equation*}
This combined with \eqref{eqn:chi-square-conditional-expectation} implies
\begin{equation*}
  \probE(X^2\Big|X^2>t)\sim t,
\end{equation*}
hence
\begin{equation}
  T=\sqrt{\probE(X^2\Big|X^2>t)}\sim\sqrt t\sim\sqrt{2\log\frac Ns}.
\end{equation}

\section{Appendix: Tools from Probability Theory}\label{app:tool}

In this appendix we collect some tools from probability theory 
that play a role (but are not essential) in our proof. 
\begin{thm}[Concentration of measure, \cite{ledoux2001concentration}]
Let $X$ be a standard Gaussian vector taking values in $\mathbb R^n$ 
and $f:\mathbb R^n\to\mathbb R$ a $K$-Lipshitz function, i.e. 
\begin{equation*}
  |f(\vect x)-f(\vect y)|\le K\|\vect x-\vect y\|,\quad\text{for all $\vect x,\vect y\in\mathbb R^n$.}
\end{equation*}
Then we have
\begin{gather*}
  \probP(f(X)-\probE f(X)>t)\le\mathrm {\rm e}^{-\frac{t^2}{2K^2}},\\
  \probP(f(X)-\probE f(X)<-t)\le\mathrm {\rm e}^{-\frac{t^2}{2K^2}}.
\end{gather*}
\end{thm}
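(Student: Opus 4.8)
The plan is to prove the upper tail bound $\probP(f(X)-\probE f(X)>t)\le\mathrm e^{-t^2/(2K^2)}$; the lower tail then follows immediately by applying the same estimate to $-f$, which is again $K$-Lipschitz and has mean $-\probE f(X)$. By rescaling $f\mapsto f/K$ I may assume $K=1$, and by subtracting a constant I may assume $\probE f(X)=0$, so the target reduces to showing $\probP(f(X)>t)\le\mathrm e^{-t^2/2}$. As a preliminary regularity reduction, I would approximate a general $1$-Lipschitz $f$ by smooth $1$-Lipschitz functions $f_\varepsilon$ (e.g. convolving with a narrow Gaussian mollifier) with $f_\varepsilon\to f$ pointwise and $\probE f_\varepsilon\to\probE f$; the tail bound passes to the limit, so it suffices to treat smooth $f$ with $\|\nabla f\|\le 1$ everywhere.

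The core step is a Chernoff bound controlled through the moment generating function $H(\lambda)=\probE\,\mathrm e^{\lambda f(X)}$. I would invoke the Gaussian logarithmic Sobolev inequality $\mathrm{Ent}(g^2)\le 2\,\probE\|\nabla g\|^2$ with the choice $g=\mathrm e^{\lambda f/2}$. Since $\|\nabla g\|^2=\tfrac{\lambda^2}{4}\|\nabla f\|^2\,\mathrm e^{\lambda f}\le\tfrac{\lambda^2}{4}\mathrm e^{\lambda f}$ while $\mathrm{Ent}(\mathrm e^{\lambda f})=\lambda H'(\lambda)-H(\lambda)\log H(\lambda)$, the inequality becomes the differential inequality $\lambda H'-H\log H\le\tfrac{\lambda^2}{2}H$. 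Dividing by $\lambda^2 H$ recognizes the left-hand side as $\frac{\mathrm d}{\mathrm d\lambda}\bigl(\lambda^{-1}\log H\bigr)$, and since $\lambda^{-1}\log H(\lambda)\to H'(0)/H(0)=\probE f=0$ as $\lambda\to 0$, integrating from $0$ gives $\log H(\lambda)\le\lambda^2/2$. Then $\probP(f>t)\le\mathrm e^{-\lambda t}H(\lambda)\le\mathrm e^{-\lambda t+\lambda^2/2}$, and optimizing by taking $\lambda=t$ yields $\mathrm e^{-t^2/2}$; undoing the rescaling restores the factor $1/K^2$.

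The main obstacle is supplying the Gaussian log-Sobolev inequality itself, which is the only genuinely nontrivial ingredient. I would either cite it as a standard fact or derive it by tensorization: establish the elementary two-point (Bernoulli) log-Sobolev inequality in one dimension, tensorize it over the product on the hypercube, and transfer to the Gaussian case via the central limit theorem applied to normalized sums of $\pm1$ variables. An alternative route that sidesteps log-Sobolev is the Gaussian isoperimetric inequality: taking $A=\{f\le m\}$ for $m$ the median gives $\gamma_n(A)\ge\tfrac12$, so its $r$-enlargement satisfies $\gamma_n(A_r)\ge\Phi(r)$, and $1$-Lipschitzness forces $f\le m+r$ on $A_r$, hence $\probP(f>m+r)\le\mathrm e^{-r^2/2}$; one would then bound the gap between median and mean, which costs an $O(1)$ constant. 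I expect the log-Sobolev route to reproduce the stated constant most cleanly, so I would adopt it as the primary plan and keep the isoperimetric argument as a backup.
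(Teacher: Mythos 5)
The paper never proves this statement: it appears in the appendix of probabilistic tools, quoted verbatim from Ledoux's monograph, and is used throughout (for the tail bounds on $\sigma^s_{\max}$, $\sigma^s_{\min}$ and for the concentration of $T_k$) strictly as a black box. So there is no in-paper proof to compare against; what matters is whether your argument stands on its own, and it does. Your proof is the standard Herbst argument and it is correct in every step: the reduction to $K=1$, $\probE f(X)=0$ is harmless; the mollification step is even easier than you state, since convolving a $1$-Lipschitz $f$ with a Gaussian mollifier gives \emph{uniform} convergence ($|f_\varepsilon(\vect x)-f(\vect x)|\le\varepsilon\,\probE\|Z\|$), so passing the tail bound to the limit is immediate; and the core computation is right, namely with $g=\mathrm{e}^{\lambda f/2}$ the Gaussian log-Sobolev inequality $\mathrm{Ent}(g^2)\le 2\,\probE\|\nabla g\|^2$ yields $\lambda H'-H\log H\le\frac{\lambda^2}{2}H$, whose left side divided by $\lambda^2H$ is exactly $\frac{\mathrm{d}}{\mathrm{d}\lambda}\bigl(\lambda^{-1}\log H(\lambda)\bigr)$, and integrating from $0$ using $\lambda^{-1}\log H(\lambda)\to\probE f=0$ gives $\log H(\lambda)\le\lambda^2/2$ and hence the Chernoff bound $\mathrm{e}^{-t^2/2}$ at $\lambda=t$. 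Two remarks on your choices: first, adopting the log-Sobolev route as primary is the right call, because it reproduces the constant in the exponent exactly, whereas the isoperimetric backup bounds deviation from the \emph{median} and the median-to-mean correction degrades the constant, which would leave you with a weaker statement than the one quoted; second, leaving the Gaussian log-Sobolev inequality itself as a citation (or as the two-point/tensorization/CLT derivation you sketch) is an acceptable foundation --- it is no deeper an import than the paper's own wholesale citation of the theorem you are proving.
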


The next lemma is on comparison of Gaussian processes that plays an important role 
in proving Proposition~\ref{prop:upperbound}.
\begin{lem}[Slepian-Fernique lemma, \cite{gordon1987elliptically}]\label{lem:slepian-fernique}
Let $(X_{s,t})_{s\in S, t\in T}$ and $(Y_{s,t})_{s\in S, t\in T}$ be two Gaussian processes 
defined on the index set $S\times T$. 
Assume that for any $s,s'\in S$, $t,t'\in T$, $s\ne s'$ we have
\begin{gather*}
  \probE|X_{s,t}-X_{s,t'}|^2\le\probE|Y_{s,t}-Y_{s,t'}|^2,\\
  \probE|X_{s,t}-X_{s',t'}|^2\ge\probE|Y_{s,t}-Y_{s',t'}|^2.
\end{gather*}
Then 
\begin{equation*}
  \probE\min_{s\in S}\max_{t\in T}X_{s,t}\le\probE\min_{s\in S}\max_{t\in T}Y_{s,t}.
\end{equation*}
\end{lem}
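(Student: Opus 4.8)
The plan is to establish this min--max comparison by the \emph{Gaussian interpolation} method, using a smooth surrogate for the operation $\min_s\max_t$. First I would reduce to the case where $S$ and $T$ are finite: the general case follows by approximating with finite subsets and passing to the limit, invoking separability of the processes together with monotone/bounded convergence. With $S,T$ finite, index the coordinates by $p=(s,t)$ and introduce the smoothed objective
\begin{equation*}
  F_{\beta,\lambda}(\vect z)=-\frac1\beta\log\sum_{s\in S}\left(\sum_{t\in T}{\rm e}^{\lambda z_{s,t}}\right)^{-\beta/\lambda},\qquad \beta,\lambda>0,
\end{equation*}
which converges uniformly to $\min_s\max_t z_{s,t}$ as $\beta,\lambda\to\infty$ (with error $\complexityO(\beta^{-1}\log|S|+\lambda^{-1}\log|T|)$). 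It therefore suffices to prove $\probE F_{\beta,\lambda}(\vect X)\le\probE F_{\beta,\lambda}(\vect Y)$ and then let $\beta,\lambda\to\infty$.

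To compare the two expectations I would interpolate. Taking $\vect X,\vect Y$ independent and setting $\vect Z_u=\sqrt u\,\vect X+\sqrt{1-u}\,\vect Y$, define $\psi(u)=\probE F_{\beta,\lambda}(\vect Z_u)$; the goal becomes $\psi(1)\le\psi(0)$, i.e. $\psi'(u)\le0$. Gaussian integration by parts (Stein's identity) gives the standard interpolation formula
\begin{equation*}
  \psi'(u)=\frac12\sum_{p,q}\left(\probE[X_pX_q]-\probE[Y_pY_q]\right)\probE\left[\partial_p\partial_q F_{\beta,\lambda}(\vect Z_u)\right],
\end{equation*}
which is legitimate because $F_{\beta,\lambda}$ has bounded first and second derivatives. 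The crucial structural observation is that $F_{\beta,\lambda}(\vect z+c\vect 1)=F_{\beta,\lambda}(\vect z)+c$, so $\sum_q\partial_q F_{\beta,\lambda}\equiv1$ and hence $\sum_q\partial_p\partial_q F_{\beta,\lambda}\equiv0$ for every $p$. Using this to eliminate the diagonal (variance) contributions, the raw covariances may be replaced by increment variances, yielding
\begin{equation*}
  \psi'(u)=-\frac14\sum_{p\ne q}\left(\probE|X_p-X_q|^2-\probE|Y_p-Y_q|^2\right)\probE\left[\partial_p\partial_q F_{\beta,\lambda}(\vect Z_u)\right].
\end{equation*}
This form is exactly matched to the hypotheses, which are stated in terms of increments, so that no separate equal-variance assumption is needed.

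It remains to determine the sign of the mixed Hessian $\partial_p\partial_q F_{\beta,\lambda}$ for $p=(s,t)\ne q=(s',t')$. Writing $w_{s,t}$ for the within-fiber softmax weights and $\pi_s$ for the across-fiber softmin weights, a direct differentiation gives $\partial_p\partial_q F_{\beta,\lambda}=-\pi_s w_{s,t}w_{s,t'}\big(\beta(1-\pi_s)+\lambda\big)\le0$ when $s=s'$ (same fiber, $t\ne t'$), and $\partial_p\partial_q F_{\beta,\lambda}=\beta\,\pi_s\pi_{s'}w_{s,t}w_{s',t'}\ge0$ when $s\ne s'$. Combining these signs with the two hypotheses term by term---within a fiber the $X$-increment is no larger than the $Y$-increment while the Hessian entry is $\le0$, and across fibers the $X$-increment is no smaller while the Hessian entry is $\ge0$---shows that every summand of the displayed sum is nonnegative, whence $\psi'(u)\le0$. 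Integrating over $u\in[0,1]$ gives $\probE F_{\beta,\lambda}(\vect X)\le\probE F_{\beta,\lambda}(\vect Y)$, and letting $\beta,\lambda\to\infty$ yields the claim. \textbf{The main obstacle} is precisely the Hessian sign computation: correctly differentiating the nested log--sum--exp and verifying that the within-fiber and across-fiber second partials carry opposite, and the right, signs. Everything else---the interpolation formula, the increment reduction, and the limiting argument---is routine once this sign structure is in hand.
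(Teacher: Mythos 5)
The paper never proves this lemma: it is imported as a black box from Gordon \cite{gordon1987elliptically}, and the only proof in the paper's appendix is that of Lemma~\ref{lem:slepian}, which merely \emph{applies} it. So there is no internal proof to compare against; what you have done is reconstruct the standard modern proof of Gordon's minimax comparison by smooth interpolation, and your reconstruction is correct. I verified the step you flag as the main obstacle: writing $m_s=\lambda^{-1}\log\sum_t{\rm e}^{\lambda z_{s,t}}$, $w_{s,t}={\rm e}^{\lambda z_{s,t}}/\sum_{t'}{\rm e}^{\lambda z_{s,t'}}$ and $\pi_s={\rm e}^{-\beta m_s}/\sum_{s'}{\rm e}^{-\beta m_{s'}}$, one gets $\partial_{(s,t)}F_{\beta,\lambda}=\pi_s w_{s,t}$, hence $\sum_p\partial_pF_{\beta,\lambda}\equiv1$, and differentiating once more gives exactly your two expressions: $-\pi_s w_{s,t}w_{s,t'}\bigl(\beta(1-\pi_s)+\lambda\bigr)\le0$ within a fiber and $\beta\pi_s\pi_{s'}w_{s,t}w_{s',t'}\ge0$ across fibers, valid for all $\beta,\lambda>0$ with no relation between $\beta$ and $\lambda$ required. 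Your diagonal-elimination step is also sound: expressing $\probE X_pX_q-\probE Y_pY_q$ through increment variances and using $\sum_q\partial_p\partial_qF_{\beta,\lambda}\equiv0$ cancels the variance terms and yields the factor $-\tfrac14$ in your formula for $\psi'(u)$, which is why, as you correctly observe, no equal-variance hypothesis is needed --- matching the generality of the statement as quoted in the paper. Compared with Gordon's original route, which extends the Slepian-type normal comparison identity to min--max events and delivers the stronger distributional inequality $\probP\bigl(\bigcap_s\bigcup_t\{X_{s,t}\ge c_{s,t}\}\bigr)\le\probP\bigl(\bigcap_s\bigcup_t\{Y_{s,t}\ge c_{s,t}\}\bigr)$ (from which the expectation comparison follows by integrating tails), your interpolation argument yields only the expectation inequality, but it is shorter and self-contained, needing nothing beyond Stein's identity and elementary calculus. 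The two loose ends you flag are genuinely the only ones: the reduction to finite $S,T$ requires separability, which is harmless for the paper's application (continuous processes on the compact sets $\mathbb S^{n-1}$ and $\Psi_k$), and the uniform approximation error $\complexityO(\beta^{-1}\log|S|+\lambda^{-1}\log|T|)$ is as you state.
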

\begin{cor}\label{cor:slepian-fernique}
Let $(X_t)_{t\in T}$ and $(Y_t)_{t\in T}$ be two Gaussian processes 
defined on the index set $T$. 
Assume that for any $t,t'\in T$ we have
\begin{equation*}
  \probE|X_t-X_{t'}|^2\le\probE|Y_t-Y_{t'}|^2.
\end{equation*}
Then 
\begin{equation*}
  \probE\max_{t\in T}X_t\le\probE\max_{t\in T}Y_t.
\end{equation*}
\end{cor}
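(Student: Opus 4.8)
The plan is to derive the corollary as the special case of Lemma~\ref{lem:slepian-fernique} in which the outer index set $S$ is a single point. Concretely, I would fix an arbitrary symbol $s_0$, set $S=\{s_0\}$, and define two-parameter Gaussian processes on $S\times T$ by $\tilde X_{s_0,t}=X_t$ and $\tilde Y_{s_0,t}=Y_t$ for all $t\in T$. With this identification the conclusion $\probE\max_{t\in T}X_t\le\probE\max_{t\in T}Y_t$ is exactly the conclusion of Lemma~\ref{lem:slepian-fernique}, since taking a minimum over the singleton $S$ leaves the inner maximum unchanged, i.e. $\probE\min_{s\in S}\max_{t\in T}\tilde X_{s,t}=\probE\max_{t\in T}X_t$ and similarly for $\tilde Y$.

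It then remains to check that the hypotheses of Lemma~\ref{lem:slepian-fernique} are met. The first hypothesis, $\probE|\tilde X_{s_0,t}-\tilde X_{s_0,t'}|^2\le\probE|\tilde Y_{s_0,t}-\tilde Y_{s_0,t'}|^2$ for all $t,t'\in T$, is precisely the assumption of the corollary after the substitution $\tilde X_{s_0,t}=X_t$, $\tilde Y_{s_0,t}=Y_t$. The second hypothesis is imposed only for \emph{distinct} $s,s'\in S$; but $S$ has a single element, so no such pair exists and the condition is vacuously satisfied. Hence both hypotheses hold and Lemma~\ref{lem:slepian-fernique} applies, yielding the claim.

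Since this is a direct specialization, I do not anticipate any genuine obstacle; the only point demanding care is the handling of the second, cross-column hypothesis, which must be recognized as vacuous rather than as a constraint to be verified. One could alternatively prove the corollary from scratch via the classical Slepian interpolation argument, differentiating $\probE\max_t f(\sqrt{1-\lambda}\,X+\sqrt{\lambda}\,Y)$ along the Gaussian path and invoking Gaussian integration by parts, but deducing it from the already-cited Lemma~\ref{lem:slepian-fernique} is cleaner and avoids reproducing that machinery.
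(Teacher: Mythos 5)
Your proposal is correct and matches the paper's (implicit) treatment: the corollary is stated as an immediate consequence of Lemma~\ref{lem:slepian-fernique}, obtained precisely by taking the index set $S$ to be a singleton, so that the minimum over $S$ is trivial and the cross-index hypothesis involving $s\ne s'$ is vacuous. Your careful note that the second hypothesis is vacuous rather than something to verify is exactly the right observation.
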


\begin{proof-special}
  We first check that 
  \begin{equation}\label{eqn:app1}
    \probE|X_{\vect u,\vect v}-X_{\vect u',\vect v}|^2=\probE|Y_{\vect u,\vect v}-Y_{\vect u',\vect v}|^2. 
  \end{equation}
  In fact, the left hand side is equal to $\|\vect u-\vect u'\|^2\|\vect v\|^2$, 
  while the right hand is equal to $\|\vect u-\vect u'\|^2$. 
  This implies \eqref{eqn:app1} since $\|v\|=1$ by definition of $\Psi_k$. 

  Next we show 
  $$
    \probE|X_{\vect u,\vect v}-X_{\vect u',\vect v'}|^2\le\probE|Y_{\vect u,\vect v}-Y_{\vect u',\vect v'}|^2, 
  $$
  for \emph{all} $\vect u,\vect u'\in\mathbb S^{n-1}$, $\vect v,\vect v'\in\Psi_k$. 
  A little computation shows this is equivalent to
  \begin{equation}\label{eqn:app2}
    \sum_i\sum_j|u_iv_j-u_i'v_j'|^2\le\|\vect u-\vect u'\|^2+\|\vect v-\vect v'\|^2.
  \end{equation}
  By expanding $|u_iv_j-u_i'v_j'|^2$ as 
  \begin{equation*}
    (u_i-u_i')^2v_j^2+u_i'^2(v_j-v_j')^2+2(u_i-u_i')v_ju_i'(v_j-v_j'),
  \end{equation*}
  equation \eqref{eqn:app2} is reduced to 
  $$
    (1-\langle\vect u,\vect u'\rangle)(1-\langle\vect v,\vect v'\rangle)\ge0,
  $$
  which readily follows from Cauchy-Schwarz inequality.

  The upper bound in Lemma~\ref{lem:slepian} is an easy corollary of \eqref{eqn:app2} 
  and Corollary~\ref{cor:slepian-fernique}, 
  while the lower bound follows from \eqref{eqn:app1}, \eqref{eqn:app2} and Lemma~\ref{lem:slepian-fernique}.
\end{proof-special}


\bibliographystyle{IEEEtran}
\bibliography{IEEEabrv,mybibfile}
\vfill\pagebreak

\end{document}